\newtheorem{theorem}{Theorem}
\newtheorem*{theorem*}{Theorem}
\newtheorem{proposition}[theorem]{Proposition}
\newtheorem*{proposition*}{Proposition}
\newtheorem{lemma}[theorem]{Lemma}
\newtheorem*{lemma*}{Lemma}
\newtheorem*{conjecture*}{Conjecture}
\newtheorem*{fact*}{Fact}
\newtheorem*{hypothesis*}{Hypothesis}
\newtheorem*{claim*}{Claim}
\theoremstyle{definition}
\newtheorem{definition}[theorem]{Definition}
\newtheorem{construction}[theorem]{Construction}
\newtheorem{example}[theorem]{Example}
\newtheorem{algorithm}[theorem]{Algorithm}
\theoremstyle{remark}
\newtheorem{remark}[theorem]{Remark}
\newtheorem*{remark*}{Remark}
\newtheorem*{observation*}{Observation}
\newcommand{\bsc}{\mathsf{BSC}}
\newcommand{\ke}[1]{{\footnotesize\color{red}[Ke: #1]}}
\newcommand{\short}{1}
\pgfplotsset{compat=1.16}
\begin{document}
\title{A Practical Coding Scheme \\ for the BSC with Feedback} 



 \author{
   \IEEEauthorblockN{Ke Wu\IEEEauthorrefmark{1} and
                     Aaron B. Wagner\IEEEauthorrefmark{2}
                     }
   \IEEEauthorblockA{\IEEEauthorrefmark{1}%
                    Computer Science Department,
                    Carnegie Mellon University, Pittsburgh, PA 15213 USA.
                    kew2@andrew.cmu.edu.
                    }
   \IEEEauthorblockA{\IEEEauthorrefmark{2}%
                     School of Electrical and Computer Engineering,
                       Cornell University, Ithaca, NY 14850 USA.
                     wagner@cornell.edu.}
}

\maketitle


\begin{abstract}
We provide a practical implementation of the rubber method of Ahlswede~\emph{et al.} for binary alphabets. The idea is to create the ``skeleton'' sequence therein via an arithmetic decoder designed for a particular $k$-th order Markov chain. For the stochastic binary symmetric channel, we show that the scheme is nearly optimal in a strong sense for certain parameters.
\end{abstract}


\section{Introduction}

We consider the binary symmeric channel with ideal feedback, both in its stochastic- and adversarial-noise forms. In the former, each bit is flipped independently with some probability $p$. In the latter, an omniscient adversary can flip up to a fraction $f$ of the bits in order to disrupt the communication.

The information-theoretic limits for both forms of the channel, assuming perfect feedback, are well-known.  In the adversarial case, the capacity as a function of $f$ was determined by Zigangirov~\cite{zigangirov1976number}, building on earlier results of Berlekamp~\cite{berlekamp1968block}. For the stochastic version, the capacity equals that of the non-feedback version (e.g.,~\cite{cover2012elements, shannon1956zero}) and likewise the high-rate error exponent, normal approximation, and moderate deviations performance are all unimproved by feedback.  In fact, the third-order coding rate is unimproved by feedback~\cite{altuug2020exact}, as is the order of the optimal ``pre-factor'' in front of the error exponent at high rates.  Thus, at least for the stochastic version of the channel, feedback offers very little improvement in coding performance.

In general, feedback is known to simplify the coding problem even if it does not provide for improved performance. The erasure (e.g., \cite[Section 17.1]{el2011network}), and Gaussian channels \cite{schalkwijk1966coding1,schalkwijk1966coding2} provide striking examples of this phenomenon. For the BSC, see \cite{horstein1963sequential, schalkwijk1971class} for classical and\cite{9174445,9174232} for recent work on devising implementable schemes using feedback.

For the adversarial symmetric channel with feedback (and arbitrary, finite alphabet size), Ahlswede \emph{et al.}~\cite{4036418} proposed an explicit scheme called the \emph{rubber method}.  In the binary case, for a fixed $\ell > 2$, the message is encoded as a ``skeleton'' string containing no substring of $\ell$ consecutive zeros. The encoder then transmits this string, sending $\ell$ consecutive zeros to indicate that an error has occurred. For each $\ell$, this scheme achieves the capacity of the adversarial channel for a certain choice of $f$. This scheme simplifies significantly the original achievability argument of Berlekamp~\cite{berlekamp1968block}.  For ternary and larger alphabets, the scheme is even simpler. Rubber method has since been generalized~\cite{lebedev2016coding, deppe2020coding, deppe2020bounds}.

We only consider the binary case in this paper, and we make two contributions. The first is to propose the use of arithmetic coding applied to a particular Markov chain in order to efficiently encode the message sequence into the corresponding skeleton string. This results in a practically-implementable end-to-end scheme, with only a negligible rate penalty. The second contribution is showing that, for each $\ell$, there is a special rate $R_\ell^*$ and crossover probability $p_\ell$ such that the resulting scheme is optimal with respect to the second-order coding rate and moderate deviations performance for the channel with crossover probability $p_\ell$ and error-exponent optimal at rate $R_\ell^*$ for all channels with crossover probability less than $p_\ell$. We also consider the third-order coding rate and the ``pre-factor'' of the error exponent of the scheme. These turn out to be nearly, but not exactly optimal. See Section~\ref{sec:optimal}.

In Section \ref{sec:prelim} we introduce our notation and provide various preliminaries. In Section \ref{sec:markov} and \ref{sec:coding} and we describe our coding scheme. In Section \ref{sec:optimal} we present our main results.
\if0\short{The proofs are omitted due to space constraints, but are available in the full version: ?? ~\cite{}}\fi
\section{Notation and Preliminaries}
\label{sec:prelim}
Capital letters such as $X$ or $Y$ denote random variables. We use $x^{n}$ to denote the first $n$ bits of the sequence $x_1,\dots,x_n$,
and we use $z\|z'$ to denote the concatenation of two strings $z$ and $z'$. In addition, $\lfloor x^N\rfloor_{L}$ denotes the truncation of $x^N$ to the first $L$ bits.

We use ${\sf Bin}(n,p)$ to denote the binomial distribution with size $n$ and success probability $p$ and $\mathcal{N}(\mu,\sigma^2)$ to denote the normal distribution with mean $\mu$ and variance $\sigma^2$. Moreover, $B(p)$ denotes the Bernoulli distribution with success probability $p$. We use $D(P\|Q)$ to denote the Kullback-Leibler divergence between distribution $P$ and $Q$. 

\subsection{The Channel Model}
Let $\bsc(p)$ denote a binary symmetric channel with cross-over probability $p\in(0,\frac{1}{2})$ without feedback. That is, $\bsc(p)$ has input alphabet $\mathcal{X}=\{0,1\}$ and output alphabet $\mathcal{Y}=\{0,1\}$, and probability transition matrix
\begin{equation*}
p(y|x)=
    \begin{pmatrix}
    1-p& p\\
    p & 1-p\\
    \end{pmatrix}.
\end{equation*}

Suppose that an encoder wishes to send a message $m$ in a message space $\mathcal{M}$ through $\bsc(p)$. It first encodes the message $m$ using an encoding function $f$, and sends $x^N=f(m)$ through the channel. The decoder, upon receiving $y^N$ from the channel, runs a decoding function $g$ on $y^N$ to obtain $m'$. The pair $(f,g)$ is called a \emph{code} $\mathcal{C}_{N,R}$ with block length $N$ and rate $R=\frac{\log |\mathcal{M}|}{N}$. The (average) error probability of a code $\mathcal{C}_{N,R}$ is defined as $P_e(\mathcal{C}_{N,R}):=\frac{1}{|\mathcal{M}|}\sum_{m\in\mathcal{M}}\Pr[m'\neq m]$. 

The \emph{capacity} of $\bsc(p)$ is well-known to be
\[C(\bsc(p))=1-h(p),\]
where $h(\cdot)=-p\log p-(1-p)\log (1-p)$ is the binary entropy function, and the $\log$ is base-$2$ throughout.

We will also consider the adversarial binary symmetric channel $\bsc_{adv}(f)$ in which at most $f$ fraction of transmitted bits can be adversarially flipped.

Feedback allows the encoder to see exactly what the decoder receives after each transmission and update its next transmission accordingly. In the BSC with feedback, which we denote as $\bsc^{fb}(p)$, the encoding function $f$ consists of a sequence of maps $\{f_i\}_{i=1}^N$. Each $f_i$ takes as input $m,y_1,\dots,y_{i-1}$, and outputs $x_i$, the next bit to send. The decoder then runs $g(y^N)$ to obtain $m'$.

It is well-known that feedback does not improve the channel capacity: \[C(\bsc^{fb}(p))=1-h(p).\] 
For the adversarial feedback BSC channel $\bsc_{adv}^{fb}(f)$, an upper bound on the capacity was first shown by Berlekamp \cite{berlekamp1968block}. He also gives a lower bound that coincides with the upper bound when $f\geq \frac{3-\sqrt{5}}{4}$. A lower bound that coincides with the upper bound for $f<\frac{3-\sqrt{5}}{4}$ was given by Zigangirov \cite{zigangirov1976number}, thus determining the capacity for $\bsc_{adv}^{fb}(f)$:
\begin{equation*}
    C(\bsc_{adv}^{fb}(f))=
    \begin{cases}
    1-h(f) &\text{ if }0\leq f\leq \frac{3-\sqrt{5}}{4},\\
    (1-3f)\log\frac{1+\sqrt{5}}{2} &\text{ if }\frac{3-\sqrt{5}}{4}<f\leq 1.
    \end{cases}
\end{equation*}

\begin{figure}
\centering
\begin{tikzpicture}[trim axis left]
\begin{axis}[grid=major, xmin=0, xmax=0.5, ymin=0, ymax=1,
     xlabel=$p$, ylabel={$R$},
     xtick = {0,0.1910, 1/3, 0.5}, xticklabels={$0$, $\frac{3-\sqrt{5}}{4}$, $\frac{1}{3}$, $\frac{1}{2}$},
     ytick = {-0.5,1},
     scale=0.9, restrict y to domain=-1:1]
\addplot [dashed, red,thick, samples=300, smooth,domain=0:0.5] {1+x*ln(x)/ln(2)+(1-x)*ln(1-x)/ln(2)};
\addlegendentry{Capacity for $\bsc^{fb}(p)$}
\addplot [black, thick, samples=300, smooth,domain=0:1/(3+sqrt(5))] {1+x*ln(x)/ln(2)+(1-x)*ln(1-x)/ln(2)};
\addlegendentry{Capacity for $\bsc^{fb}_{adv}(p)$}
\addplot [black,thick, samples=300, smooth,domain=1/(3+sqrt(5)):0.5] {(1-3*x)*ln((1+sqrt(5))/2)/ln(2)};
\node at (axis cs:1/3,-0.5) {$\frac{1}{3}$};
\end{axis}
\end{tikzpicture}
\caption{Capacity for BSC with feedback and adversarial BSC with feedback.}
\end{figure}
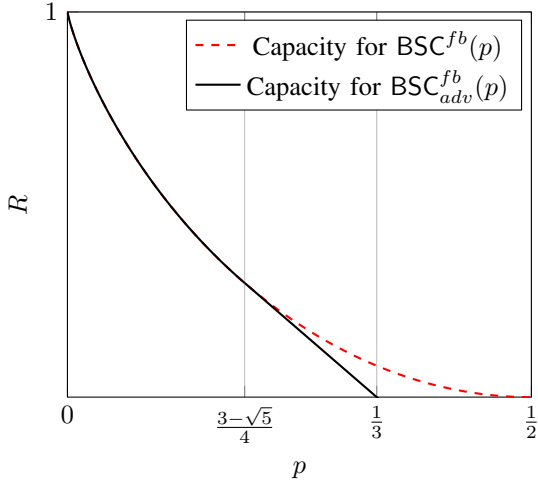

We say that a code $\mathcal{C}$ for the $\bsc_{adv}^{fb}(f)$ is \emph{admissible} if $\mathcal{C}$ can correct any error pattern with error fraction at most $f$. We say that a sequence of codes $\{\mathcal{C}_{N,R}\}_N$ for $\bsc^{fb}(p)$ is \emph{admissible} if the error probability $P_e(\mathcal{C}_{N,R})$ tends to $0$ as $N$ goes to infinity.

\subsection{Markov Chains}
\begin{definition}
A discrete stochastic process $\{X_i\}$ is said to be an $(\ell-1)$-th order
Markov chain if for any $i$,
\begin{align*}
    &\Pr[X_{i}=x_i|X_{1}=x_1,\dots,X_{i-1}=x_{i-1}]\\
    =&\Pr[X_{i}=x_i|X_{i-\ell+1}=x_{i-\ell+1},\dots,X_{i-1}=x_{i-1}],
\end{align*}
for all $x_1,\dots,x_i\in\mathcal{X}$. 

\end{definition}

\subsection{Rubber Method}
Here we briefly present the rubber method for $\bsc^{fb}_{adv}(f)$ \cite{4036418}. Let $\mathcal{A}^{N'}_{\ell}$ denote the set of binary sequences of length $N'$ with no $\ell$ consecutive zeros. Such sequences are called \emph{skeleton sequences}. The sender chooses a skeleton sequence $x^{N'}\in\mathcal{A}^{N'}_{\ell}$ and the decoder's goal is to recover that sequence correctly. The idea is that the encoder can use $\ell$ consecutive zeros to signal an error. Specifically, we have

\begin{itemize}
    \item Decoding $g_R$: the decoder maintains a stack of received bits, which begins empty. Whenever the decoder receives a bit, it inserts the received bit onto the stack and checks if there are consecutive $\ell$ zeros in the stack. If yes, it removes these $\ell$ zeros as well as the bit before these consecutive $\ell$ zeros from the stack. Finally, it truncates the output to $N'$ bits.
    
    \item Encoding $f_R$: if the decoder's current stack is a prefix of $x^{N'}$, then send the next bit in $x^{N'}$. Otherwise send a $0$. If $x^{N'}$ has been sent in its entirety, then send $1$ for all remaining time steps.
\end{itemize}

\begin{center}
\begin{figure}
    \includegraphics[scale = 0.35]{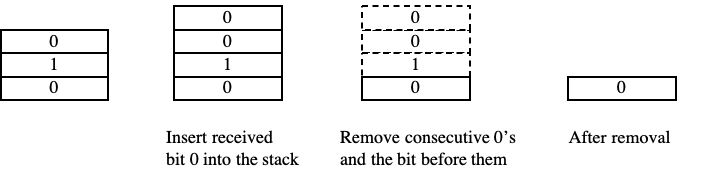}
    \caption{The decoder's stack with $\ell=2$.}
    \label{fig:stack}
    \end{figure}
\end{center}

\begin{proposition}
\label{pro:rubber}
For skeleton sequence set $\mathcal{A}_{\ell}^{N'}$ and block length $N$, a code constructed using the rubber method is admissible for $\bsc_{adv}^{fb}(f)$ if
\begin{equation}
N' + (\ell + 1) fN \le N.
\end{equation}
\end{proposition}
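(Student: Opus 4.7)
The plan is to argue by a direct transmissions-budget accounting. First, I would record the exact stack-size identity: each channel use pushes one bit onto the decoder's stack and each pop removes exactly $\ell+1$ items, so after the $N$ transmissions the final stack has size $N - (\ell+1)P$, where $P$ is the total number of pops executed during the protocol. The desired conclusion then reduces to two sub-claims: (a)~$P \leq fN$, and (b)~the first $N'$ coordinates of the final stack coincide with $x^{N'}$.

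The technical core of both sub-claims is the invariant that immediately after every pop the decoder's stack equals a prefix of $x^{N'}$ (possibly extended by a block of padding $1$'s when the skeleton has already been transmitted). Granted this invariant, (a) follows by induction on the pop index: starting from a prefix state the encoder's programmed behavior in any error-free interval pushes only skeleton bits or $1$'s onto the stack, and since $x^{N'}\in\mathcal{A}_\ell^{N'}$ no run of $\ell$ zeros can form without at least one fresh adversarial flip, so each pop is chargeable to a distinct error. Sub-claim (b) is also immediate from the invariant: after the last pop the stack is a prefix $x_1\ldots x_j$ of the intended sequence, and the remaining pushes, together with the slack provided by the hypothesis, extend this prefix at least to length $N'$, so the first $N'$ coordinates of the final stack equal $x^{N'}$ and the truncation $\lfloor\cdot\rfloor_{N'}$ recovers it exactly.

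The hard step is verifying the invariant itself, which demands a case analysis at each pop. The subtlety is that the $\ell$ trailing zeros triggering a pop need not all have been pushed during a correction phase: some of them may be legitimate trailing zeros of the current prefix (because $x^{N'}$ can contain runs of up to $\ell-1$ zeros), and even the ``one bit before'' the zero-run that gets removed together with them can belong to the prefix portion. Hence a single adversarial flip can erase several skeleton bits in one blow. One must check, separately for an error on a $1$-bit versus a $0$-bit of $x^{N'}$, for subsequent flips during the correction phase, and for flips in the terminal $1$-padding tail, that after each pop the surviving stack still equals a prefix of $x^{N'}$ and that the combined cost per adversarial flip (correction pushes plus any erased skeleton bits) is exactly $\ell+1$ channel uses, which is what the hypothesis $N' + (\ell+1)fN \leq N$ budgets for.
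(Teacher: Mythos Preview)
The paper does not supply its own argument here; its proof of Proposition~\ref{pro:rubber} is simply a pointer to Section~2.2 of the original rubber-method paper~\cite{ahlswedenon}. Your transmissions-budget framework is the right overall shape, and the identity $\lvert\text{stack}\rvert = N - (\ell+1)P$ together with the target $P \le fN$ is exactly how the accounting should be organized.

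However, the invariant you rely on---that \emph{immediately after every pop} the stack is a prefix of $x^{N'}$---is false, and with it the charging rule ``at least one fresh error per inter-pop interval'' collapses. Take $\ell=3$, $x^{N'}=101\cdots$, and suppose the stack is the prefix $10$. The encoder sends $x_3=1$; the adversary flips it, so the stack becomes $100$. The encoder now sends a correction $0$; the adversary flips that too, giving $1001$. Three correctly received $0$'s later the stack is $1001000$, a pop fires, and the stack becomes $100$---which is \emph{not} a prefix of $101\cdots$. Moreover, the next pop (after one further correctly received $0$, taking the stack to $1000$ and then to empty) occurs with \emph{no} intervening error, so two errors have produced two pops but not one-per-interval. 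The global inequality $P\le E$ still holds, but it does not follow from your invariant; one needs a finer potential, e.g.\ tracking the number of $1$'s currently in the ``junk'' suffix of the stack (each such $1$, together with the leading junk bit when it is a $0$, is chargeable to a distinct as-yet-uncredited error), and then bounding $\lvert\text{junk}\rvert$ via the no-$\ell$-zeros structure to obtain $(\ell+1)P + \lvert\text{junk}\rvert \le (\ell+1)E$, which is what actually forces the correct-prefix length to reach $N'$ at the end.
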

\begin{proof}
See Section 2.2 of \cite{ahlswedenon}.
\end{proof}

\begin{example} 
Suppose the encoder chooses $x = 011010\in\mathcal{A}_2^6$ and the maximum fraction of adversarial errors is $f=1/3$. 

Suppose the first three bits the decoder receives are $010$, which is not a prefix of $x$. The encoder then sends $0$ and suppose decoder sees $0100$. The decoder then erases the last three bits (the consecutive zeros and the one before them) and its stack becomes $0$. This is now a prefix of $x$ and the encoder would thus resend the second bit in $x$, which is $1$. See Figure~\ref{fig:stack}.
\end{example}

\subsection{Shannon–Fano–Elias Code and Arithmetic Coding}

The Shannon–Fano–Elias code compresses a source sequence with known distribution to near-optimal length. It uses the cumulative distribution function $F(x)$ to allot codewords. For a random variable $X\in\{1,2,\dots,M\}$ with distribution $p$, the codeword is $\lfloor\Bar{F}(x)\rfloor_{l(x)}$ where 
\[\Bar{F}(x)=\sum_{a<x}p(a)+\frac{1}{2}p(x),\]
lies between $F(x)$ and $F(x+1)$ and $l(x)=\lceil\log\frac{1}{p(x)}\rceil+1$. In addition, the Shannon-Fano-Elias code is prefix-free. That is, no codeword is a prefix of any other.

Arithmetic coding is an algorithm for efficiently computing the Shannon–Fano–Elias codeword for sequences given a method for computing the probability of the next symbol given the past (e.g.,~\cite[Ch.~4]{sayood2017introduction}). 

\subsection{Constant Recursive Sequences and the Perron–Frobenius Theorem}

\begin{lemma}[Theorem 2.3.6, \cite{cull2005difference}]
A sequence $A(n)$ is an order-$d$ \emph{constant-recursive sequence} if for all $n\geq d+1$, \[A(n)=c_1A(n-1)+c_2A(n-2)+\dots+c_dA(n-d).\] The $n$-th term $A(n)$ in the sequence must be of the form 
\[A(n)=k_1(n)\lambda_1^n+k_2(n)\lambda_2^n+\dots+k_{d'}(n)\lambda_{d'}^n,\]
where $\lambda_i$ is a root with multiplicity $d_i$ of the polynomial
\[\lambda^d-c_1\lambda^{d-1}-\dots-c_d,\]
and $k_i(n)$ is a polynomial with degree $d_i-1$.
\end{lemma}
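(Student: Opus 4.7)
The plan is to view the set of sequences satisfying the recurrence as a $d$-dimensional vector space over $\mathbb{C}$ and then exhibit an explicit basis of the claimed form.

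First I would fix a sequence $A$ satisfying the recurrence and observe that the solution set $V$ is a $\mathbb{C}$-vector space. The map $V \to \mathbb{C}^d$ sending a solution to its initial segment $(A(1),\dots,A(d))$ is linear, injective (since the recurrence determines every later term from the previous $d$), and surjective (any initial tuple extends uniquely by forward iteration). Hence $\dim V = d$, and it suffices to exhibit $d$ linearly independent elements of $V$ having the claimed form.

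Next I would introduce the shift operator $E$ on sequences, defined by $(EA)(n)=A(n+1)$, and rewrite the recurrence as $Q(E)A = 0$, where $Q(\lambda)=\lambda^d-c_1\lambda^{d-1}-\dots-c_d$ is the characteristic polynomial from the statement. Factoring $Q(\lambda)=\prod_{i=1}^{d'}(\lambda-\lambda_i)^{d_i}$, I would claim that for each $i$ and each $0\le j\le d_i-1$ the sequence $n^j\lambda_i^n$ belongs to $V$. The key computation is
$$(E-\lambda)[n^j\lambda^n] \;=\; (n+1)^j\lambda^{n+1} - \lambda\cdot n^j\lambda^n \;=\; \lambda^{n+1}\bigl((n+1)^j-n^j\bigr),$$
and since $(n+1)^j-n^j$ is a polynomial in $n$ of degree $j-1$, a short induction on $j$ shows that $(E-\lambda_i)^{j+1}$ annihilates $n^j\lambda_i^n$. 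Thus $(E-\lambda_i)^{d_i}$, and therefore $Q(E)$, sends each of these $\sum_i d_i = d$ candidate sequences to zero.

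Finally I would verify that the $d$ sequences $\{n^j\lambda_i^n\}_{i,j}$ are linearly independent. Assuming a relation $\sum_i p_i(n)\lambda_i^n \equiv 0$ with polynomials $p_i$ not all zero, one can order $|\lambda_{i}|$ and divide by the dominant one, then take $n\to\infty$ along a suitable subsequence to force the leading polynomial to vanish; iterating gives $p_i\equiv 0$ for all $i$. Equivalently, evaluating at $n=0,1,\dots,d-1$ yields a confluent Vandermonde matrix, whose determinant is a nonzero product of $(\lambda_i-\lambda_j)$ factors. Independence combined with $\dim V = d$ shows these sequences form a basis of $V$, so every $A\in V$ can be written as $\sum_i k_i(n)\lambda_i^n$ with $\deg k_i\le d_i-1$.

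The main obstacle is the linear-independence step when some $\lambda_i$ is repeated, since the naive Vandermonde argument does not apply verbatim; either the dominance argument or the confluent-Vandermonde identity must be invoked. Everything else is a direct calculation with the shift operator.
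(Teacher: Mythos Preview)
The paper does not prove this lemma: it is stated as Theorem~2.3.6 of \cite{cull2005difference} and invoked as a black box. So there is no ``paper's own proof'' to compare against.

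Your argument is the standard one and is correct. The dimension count via initial conditions, the shift-operator factorization $Q(E)=\prod_i(E-\lambda_i)^{d_i}$, and the computation showing $(E-\lambda_i)^{d_i}$ annihilates $n^j\lambda_i^n$ for $0\le j\le d_i-1$ are all fine. For the independence step, the confluent-Vandermonde route is the cleanest; the growth/dominance argument you sketch needs care when distinct roots share the same modulus, so if you keep that version you should make the ``suitable subsequence'' precise (e.g.\ average over arithmetic progressions to isolate one root on the maximal circle) or simply commit to the determinant computation. One tacit assumption worth flagging is $c_d\neq 0$, so that $\lambda=0$ is not a root; otherwise the sequences $n^j\cdot 0^n$ are identically zero for $n\ge 1$ and you lose basis elements. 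This is the usual convention for an ``order-$d$'' recurrence and is harmless, but it is good to state it.
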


\begin{definition}[(8.3.16), \cite{meyer2000matrix}]
A matrix $M$ is a \emph{positive (non-negative)} matrix if every entry of $M$ is positive (non-negative). 

A non-negative square matrix $M$ is \emph{primitive} if its $k$-th power is positive for some natural number $k$.
\end{definition}

\begin{lemma}[Perron–Frobenius Theorem, Page 674, \cite{meyer2000matrix}]
If $M$ is a primitive matrix, then $M$ has a positive real eigenvalue $\lambda^*$ such that all other eigenvalues $\lambda_i$ have absolute value $|\lambda_i|<|\lambda^*|$. Moreover, $\lambda^*$ is a simple eigenvalue and its corresponding column and row eigenvectors are positive
\label{lem:perron}. 
\end{lemma}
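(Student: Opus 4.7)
The plan is to first prove the theorem for strictly positive matrices $M>0$ and then lift the conclusion to primitive matrices by applying the positive case to $M^k$. For the positive case I would adopt the Collatz--Wielandt characterization
\[
\lambda^* \;=\; \sup_{x\ge 0,\ x\ne 0}\ \min_{i\,:\,x_i>0}\frac{(Mx)_i}{x_i}.
\]
The supremum is attained on the compact standard simplex by a routine continuity and compactness argument, and any maximizer $v$ satisfies $Mv=\lambda^* v$: if $u:=Mv-\lambda^* v$ were non-zero and non-negative, positivity of $M$ would give $Mu>0$ componentwise, so $Mv$ itself would admit a strictly larger Collatz--Wielandt ratio, contradicting maximality. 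Positivity of $M$ then promotes $v\ge 0$, $v\ne 0$ to $v=(\lambda^*)^{-1}Mv>0$, and the same argument applied to $M^T$ produces a positive left eigenvector $u$.

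For the spectral gap in the positive case, take any complex eigenpair $My=\lambda y$; the entrywise triangle inequality applied to $\lambda y_i=\sum_j M_{ij}y_j$ yields $M|y|\ge |\lambda|\,|y|$, so $|\lambda|\le \lambda^*$ by the Collatz--Wielandt characterization. When $|\lambda|=\lambda^*$ the triangle inequality must be sharp componentwise, and since every entry of $M$ is strictly positive this forces all $y_j$ to share a common phase, so $\lambda=\lambda^*$ and $y$ is a scalar multiple of the Perron vector. Algebraic simplicity follows by normalizing $u^Tv=1$ and observing that the spectral projector onto the $\lambda^*$-eigenspace equals $vu^T$, which has rank one, so no Jordan block of size two can attach to $\lambda^*$.

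For the primitive case, pick $k$ with $M^k>0$ and let $\Lambda^*$, $v>0$ be the Perron eigenpair of $M^k$ from the previous paragraphs. Since $M$ commutes with $M^k$, it preserves the one-dimensional $\Lambda^*$-eigenspace, so $Mv=\mu v$; non-negativity of $M$ combined with $v>0$ forces $\mu\ge 0$, and $\mu^k=\Lambda^*$ then pins down $\mu=\lambda^*:=(\Lambda^*)^{1/k}>0$, yielding a positive eigenvector of $M$. The main obstacle, and the only place where primitivity is genuinely required beyond non-negativity, is excluding other eigenvalues of $M$ on the circle $|\lambda|=\lambda^*$. If some $\lambda\ne \lambda^*$ had $|\lambda|=\lambda^*$, then $\lambda^k$ would be an eigenvalue of $M^k$ of modulus $\Lambda^*$, hence equal to $\Lambda^*$ by the positive case; but then $\Lambda^*$ would be an eigenvalue of $M^k$ arising from two distinct Jordan blocks of $M$, contradicting its simplicity for $M^k$. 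Algebraic simplicity of $\lambda^*$ for $M$ is inherited from that of $\Lambda^*$ for $M^k$ by the same reasoning, completing the argument.
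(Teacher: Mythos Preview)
The paper does not prove this lemma at all; it is stated as a citation of a standard result from Meyer's textbook \cite{meyer2000matrix}, with the reader referred there for details. There is therefore no ``paper's own proof'' to compare your attempt against.

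That said, your outline is a correct and entirely standard route to Perron--Frobenius: Collatz--Wielandt for the positive case, the triangle-inequality argument for the spectral gap, rank-one spectral projector for simplicity, and then lifting to primitive $M$ via $M^k$. A couple of steps are compressed more than a referee might accept in a self-contained write-up. In particular, the claim that a maximizer of the Collatz--Wielandt functional is an eigenvector deserves one more line (you need that if $u=Mv-\lambda^*v\ge 0$ is nonzero then $M v$ is itself a strictly better test vector, which uses $M>0$ to get $Mu>0$), and the algebraic-simplicity argument via $vu^T$ really shows geometric simplicity directly; one usually adds a short perturbation or generalized-eigenvector argument to rule out a size-two Jordan block. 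Your lifting step is fine: if $\lambda\ne\lambda^*$ had $|\lambda|=\lambda^*$, then the eigenvectors of $M$ for $\lambda$ and $\lambda^*$ are linearly independent and both lie in the $\Lambda^*$-eigenspace of $M^k$, contradicting its simplicity.
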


See \cite[Ch.~8]{meyer2000matrix} for further detail about the Perron-Frobenius Theorem.

\section{A Key Markov Chain}
\label{sec:markov}
In this section we show that we can efficiently compute the distribution of a Markov Chain that is uniformly distributed over $\mathcal{A}_{\ell}^N$. 

Recall that $\mathcal{A}^{N}_{\ell}$ denotes the set of binary sequences of length $N$ with no consecutive $\ell$ zeros.
\if0\short{Let $A_{\ell}(N)=|\mathcal{A}^{N}_{\ell}|$ and let $A_{\ell}(z)$ denote the number of allowable sequences in $\mathcal{A}^{N}_{\ell}$ that begin with $z$ for any binary sequence $z$. }\fi 

\begin{lemma}
\label{lem: entropyrate}
Let $\lambda^*_{\ell}$ be the unique real solution that lies in $(1,2)$ of 

\begin{equation}
\label{eqn:character}
\lambda^{\ell} = \lambda^{\ell-1}+\lambda^{\ell-2}+\dots+1.  
\end{equation}
Then $\lim_{N\rightarrow \infty}\frac{|\mathcal{A}_{\ell}^N|}{\lambda^{*^N}_{\ell}}$ exists and is positive and finite.
\end{lemma}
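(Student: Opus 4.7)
The plan is to reduce the lemma to a direct application of Perron--Frobenius via a suitably chosen $\ell\times \ell$ non-negative matrix, together with the constant-recursive sequence formula stated earlier in the preliminaries.

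First I would establish the combinatorial recurrence for $A_\ell(N):=|\mathcal{A}_\ell^N|$. By splitting on the prefix $0^k 1$ for $0\le k\le \ell-1$ (which is forced to occur somewhere in the first $\ell$ bits of any valid sequence), one obtains $A_\ell(N) = A_\ell(N-1) + A_\ell(N-2) + \cdots + A_\ell(N-\ell)$ for all $N\ge \ell$. Thus $A_\ell(N)$ is order-$\ell$ constant-recursive with characteristic polynomial $p(\lambda)=\lambda^\ell - \lambda^{\ell-1} - \cdots - 1$, exactly \eqref{eqn:character}. That $p$ has a unique real root $\lambda_\ell^*\in(1,2)$ is routine: $p(1)=1-\ell<0$, $p(2)=1>0$, and $p(\lambda)/\lambda^\ell$ is strictly increasing on $(0,\infty)$.

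Next I would encode $\mathcal{A}_\ell^N$ via a Markov chain whose state is the current run-length of trailing zeros, so the state space is $\{0,1,\dots,\ell-1\}$. The associated $\ell\times\ell$ transition-count matrix $M$ has $M_{j,0}=1$ for every $j$ (emit a $1$ and reset the run), $M_{j,j+1}=1$ for $j<\ell-1$ (emit a $0$ and lengthen the run), and zeros elsewhere. Then $A_\ell(N) = e_0^\top M^N \mathbf{1}$, and by the standard companion-form argument the characteristic polynomial of $M$ is precisely $p(\lambda)$. I would verify that $M$ is primitive by exhibiting, for each ordered pair of states $(i,j)$, a length-$\ell$ path in the underlying directed graph that goes $i\to 0\to 0\to \cdots \to j$, showing $M^\ell$ has all strictly positive entries.

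Lemma~\ref{lem:perron} then gives a simple real eigenvalue $\lambda^*>0$ of $M$ that strictly dominates every other eigenvalue in absolute value, with strictly positive left- and right-eigenvectors $u$ and $v$. Since the roots of $p$ are exactly the eigenvalues of $M$, $\lambda^*$ must coincide with the unique real root of $p$ in $(1,2)$, i.e.\ $\lambda^*=\lambda_\ell^*$. Combining this with the constant-recursive formula from the preliminaries, and using that $\lambda_\ell^*$ is simple (so the attached polynomial coefficient $k_1(N)$ is a constant $k_1$), we get $A_\ell(N) = k_1 (\lambda_\ell^*)^N + o((\lambda_\ell^*)^N)$. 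Finiteness of the limit is thus automatic; for positivity, the spectral decomposition $M^N = (\lambda_\ell^*)^N v u^\top + O(|\lambda_2|^N)$ with $|\lambda_2|<\lambda_\ell^*$ gives
\begin{equation*}
\lim_{N\to\infty}\frac{A_\ell(N)}{(\lambda_\ell^*)^N} = (e_0^\top v)(u^\top \mathbf{1}) > 0,
\end{equation*}
because every coordinate of $u$ and $v$ is strictly positive.

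The main obstacle I expect is verifying primitivity of $M$ cleanly (the matrix is sparse and not irreducible in the most naive sense one might guess), and then carefully identifying the Perron eigenvalue with the particular real root $\lambda_\ell^*\in(1,2)$ rather than some other root of $p$. Once primitivity is established, positivity of the limit is essentially forced by the strict positivity of the Perron eigenvectors, so this is where the proof crucially uses Lemma~\ref{lem:perron} rather than only the constant-recursive structure.
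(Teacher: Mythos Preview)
Your proposal is correct and follows essentially the same route as the paper: establish the linear recurrence for $A_\ell(N)$, realize the characteristic polynomial as that of a primitive nonnegative $\ell\times\ell$ matrix, and invoke Perron--Frobenius to get a simple dominant root $\lambda_\ell^*$ with positive eigenvectors. The only cosmetic differences are that the paper uses the companion matrix directly (rather than your run-length transfer matrix) and cites an external reference for $k_1>0$, whereas your expression $A_\ell(N)=e_0^\top M^N\mathbf{1}$ together with the spectral decomposition gives positivity more self-containedly.
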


\if1\short
{
\begin{proof}
We first compute the cardinality of $\mathcal{A}^N_{\ell}$. Let $A_{\ell}(N)$ denote $|\mathcal{A}^N_{\ell}|$. 

Consider all allowable sequences in $\mathcal{A}^N_{\ell}$. The number of sequences in $\mathcal{A}^N_{\ell}$ that begin with $1$ is $A_{\ell}(N-1)$. The number of sequences in $\mathcal{A}^N_{\ell}$ that begin with $01$ is $A_{\ell}(N-2)$, and so on. Continuing recursively we have that
\[A_{\ell}(N) = A_{\ell}(N-1)+A_{\ell}(N-2)+\dots+A_{\ell}(N-\ell).\]


Let $\lambda_1,\dots,\lambda_{\ell'}$ be the roots of equation (\ref{eqn:character}), where $\lambda_i$ has multiplicity $d_i$. Note that $\lambda^{\ell} = \lambda^{\ell-1}+\lambda^{\ell-2}+\dots+1$ is also the characteristic polynomial for the following $\ell\times \ell$ non-negative matrix:
\begin{equation*}
    M_{\ell}^N=
    \begin{pmatrix}
    0 & 1 & 0 &\dots &0\\
    0 & 0 & 1 &\dots &0\\
    \dots & \dots & \dots &\dots &\dots\\
    0 & 0 & 0 &\dots &1\\
    1 & 1 & 1 &\dots &1
    \end{pmatrix}.
\end{equation*}
Therefore $\lambda_1,\dots,\lambda_{\ell'}$ are also the eigenvalues of $M_{\ell}^N$. It is easy to see that equation (\ref{eqn:character}) has exactly one positive real root that lies inside $(1,2)$ and no real root in $[2,+\infty)$. Without loss of generality, we assume that $\lambda_1$ is this root. Moreover, $M_{\ell}^N$ is primitive since $(M_{\ell}^N)^{\ell}$ is a positive matrix. According to Perron–Frobenius theorem, $\lambda_1$ is a simple root with multiplicity $1$ of equation (\ref{eqn:character}) and  $|\lambda_i|<|\lambda_1|$ for $i=2,\dots,\ell'$. Therefore,
\begin{equation}
\label{eqn:AlN_size}
    A_{\ell}(N)=k_1\lambda_1^N+k_2(N)\lambda_2^N+\dots+k_{\ell'}(N)\lambda_{\ell'}^N,
\end{equation}
where $k_i(\cdot)$ is a polynomial with degree $d_i-1$. Since $\lambda_1$ is a simple dominating root and its corresponding column and row eigenvectors are positive, according to Theorem 2.4.2 in \cite{cull2005difference} and Lemma~\ref{lem:perron}, \[\lim_{N\rightarrow\infty}\frac{|A_{\ell}^N|}{\lambda_1^N}=k_1>0\]

\end{proof}
}
\fi

Note that Lemma \ref{lem: entropyrate} implies that
\[\lim_{N\rightarrow\infty}\frac{1}{N}\log|\mathcal{A}_{\ell}^N|=\log\lambda^*_{\ell}.\]
\begin{lemma}
\label{lem:mcGen}
The stochastic process that is uniformly distributed over $\mathcal{A}_{\ell}^N$ is an $(\ell-1)$-th order Markov Chain.
\end{lemma}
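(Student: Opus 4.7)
The plan is to reduce the Markov property to a simple counting observation: for the uniform distribution on $\mathcal{A}_\ell^N$, the conditional probability $\Pr[X_i = x_i \mid X_1 = x_1, \ldots, X_{i-1} = x_{i-1}]$ depends on the conditioning prefix only through its last $\ell - 1$ bits. Once this is established, a routine tower-property argument shows the right-hand side of the Markov identity is the same function of the tail.

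The crux is a combinatorial claim: for any valid prefix $z$ of length $j$ with $\ell - 1 \le j < N$, the number $N(z)$ of sequences in $\mathcal{A}_\ell^N$ beginning with $z$ depends on $z$ only through $j$ and its trailing $\ell - 1$ bits. To prove this, I would partition the ``no $\ell$ consecutive zeros'' constraints on a concatenation $z \| w$ into three classes of length-$\ell$ windows: those lying entirely inside $z$ (satisfied by assumption that $z \in \mathcal{A}_\ell^j$), those lying entirely inside $w$ (a constraint on $w$ alone), and the $\ell - 1$ windows straddling the boundary. A direct index calculation shows that each boundary window reads bits of $z$ only from positions $j - \ell + 2, \ldots, j$, i.e., its last $\ell - 1$ bits. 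Hence the set of admissible extensions $w$, and in particular its cardinality, is a function of just those bits and $N - j$.

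Given the combinatorial claim, for $i \ge \ell$ I would write
\begin{equation*}
\Pr[X_i = x_i \mid X_1 = x_1, \ldots, X_{i-1} = x_{i-1}] = \frac{N(x_1 \cdots x_i)}{N(x_1 \cdots x_{i-1})},
\end{equation*}
and observe that both numerator and denominator are determined by $x_{i-\ell+1}, \ldots, x_i$. Call the ratio $g(x_{i-\ell+1}, \ldots, x_i)$. Averaging this identity over $X_1, \ldots, X_{i-\ell}$ conditioned on $X_{i-\ell+1}, \ldots, X_{i-1}$ yields the matching expression for $\Pr[X_i = x_i \mid X_{i-\ell+1}, \ldots, X_{i-1}]$, completing the Markov identity. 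The boundary range $i < \ell$ is vacuous under the natural convention that the right-hand side conditions on the full available past.

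I expect the main obstacle to be purely notational: indexing the boundary windows cleanly and verifying that their dependence on $z$ really is confined to the trailing $\ell - 1$ bits. There is no conceptual difficulty---the statement is a manifestation of the general principle that a uniform distribution over sequences defined by a sliding local constraint of width $\ell$ is an $(\ell - 1)$-th order Markov chain.
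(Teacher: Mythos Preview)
Your proposal is correct and follows essentially the same structure as the paper's proof: both show that the number of admissible extensions of a valid prefix $z$ depends only on its last $\ell-1$ bits, then deduce the Markov identity by a tower/averaging argument. The only difference is in how the counting claim is established---the paper derives an explicit formula $A_\ell(z)=A_\ell(N-i+\alpha+1)-\sum_{k=0}^{\alpha-1}A_\ell(N-i+k+1)$ (with $\alpha$ the number of trailing zeros of $z$) and observes it depends only on $\alpha$, whereas your sliding-window partition argument reaches the same conclusion without the formula; your version is slightly cleaner for this lemma, while the paper's explicit expression is what it later reuses to compute the transition probabilities.
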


\if1\short{
\begin{proof}
Let $z$ be any binary sequence. We abuse the notation slightly by defining $A_{\ell}(z)$ to be the number of allowable sequences in $A_N^{\ell}$ that begin with $z$. 

Suppose $\{X_i\}_{i=1}^N$ is a stochastic process that is uniformly distributed over $\mathcal{A}_{\ell}^N$. Then we have

\begin{itemize}
    \item $\Pr[X_1=1] = \frac{\text{number of sequences begin with }1}{|\mathcal{A}_{\ell}^N|}= \frac{A_{\ell}(N-1)}{A_{\ell}(N)}$;
    \item $\Pr[X_1=0] = 1-\frac{A_{\ell}(N-1)}{A_{\ell}(N)}$;
    \item For $i\geq 2$, for any $z\in\{0,1\}^{i-1}$,
    \begin{align*}
        \Pr[X_{i}=1|X_1,\dots,X_{i-1}&=z]=\frac{A_{\ell}(N-i)}{A_{\ell}(z)},\\
        \Pr[X_{i}=0|X_1,\dots,X_{i-1}&=z]=1-\frac{A_{\ell}(N-i)}{A_{\ell}(z)}.
    \end{align*}
\end{itemize}

To see that $\{X_i\}$ is an $(\ell-1)$-th order Markov Chain, we only need to show that for any $i\geq \ell$, $z\in\{0,1\}^{i-1}$
\begin{align*}
    &\Pr[X_i=1|X_1,\dots,X_{i-1}=z] \\
    =&\Pr[X_i=1|X_{i-\ell+1},\dots,X_{i-1}=z[i-\ell+1,i-1]].
\end{align*}

Fix a $z\in\{0,1\}^{i-1}$ for $i\geq \ell$. Suppose $z$ ends with $\alpha$ zeros. Then $0\leq \alpha\leq \ell-1$ since the sequence is in $\mathcal{A}_{\ell}^N$. We have that
\begin{equation}
\label{eqn:Alz}
A_{\ell}(z) = A_{\ell}(N-i+\alpha+1)-\sum_{k=0}^{\alpha-1} A_{\ell}(N-i+k+1).
\end{equation}

When $\alpha=0$, equation (\ref{eqn:Alz}) becomes $A_{\ell}(z) = A_{\ell}(N-i+\alpha+1)$. This indicates that for any $z$ and $z'$ that have the same last $\ell-1$ bits, $A_{\ell}(z)=A_{\ell}(z')$ and that $\Pr[X_i=1|X_1,\dots,X_{i-1}=z]=\Pr[X_i=1|X_1,\dots,X_{i-1}=z']$. 

Then for any $x\in\{0,1\}^{\ell-1}$ and any $x'\in\{0,1\}^{i-\ell}$, we have
\begin{align*}
    \Pr[X_{i}=1&|X_{i-\ell+1},\dots,X_{i-1}=x]\\
    =\sum_{x''\in\{0,1\}^{i-\ell}}&\left\{\Pr[X_i=1|X_1,\dots,X_{i-1}=x''\|x]\right.\\
    &\left.\cdot\Pr[X^{i-\ell}=x''|X_{i-\ell+1},\dots,X_{i-1}=x]\right\}\\
    =\sum_{x''\in\{0,1\}^{i-\ell}}&\left\{\Pr[X_i=1|X_1,\dots,X_{i-1}=x'\|x]\right.\\
    &\left.\cdot\Pr[X^{i-\ell}=x''|X_{i-\ell+1},\dots,X_{i-1}=x]\right\}\\
    =\Pr[X_i=1&|X_1,\dots,X_{i-1}=x'\|x],
\end{align*}
where the second equation comes from the fact that for any two sequences with the same last $\ell-1$ bits, we have 
\begin{multline*}
    \Pr[X_i=1|X_1,\dots,X_{i-1}=x''\|x] \\ =\Pr[X_i=1|X_1,\dots,X_{i-1}=x'\|x].
\end{multline*}
%
\end{proof}
}\fi

The proof shows that to compute the probability of the next symbol
in the string given the past, we only need to compute 
 $A_{\ell}(N)$ for various values of $N$. This can be computed using $A_{\ell}(N)=k_1\lambda_1^N+k_2(N)\lambda_2^N+\dots+k_{\ell'}(N)\lambda_{\ell'}^N$ where $\lambda_i$ are the roots of equation (\ref{eqn:character}) and $c_1,k_1(N),\dots,k_{\ell'}(N)$ can be determined by the initial conditions $A_{\ell}(1)=2,\dots,A_{\ell}(\ell-1)=2^{\ell-1}, A_{\ell}(\ell)=2^{\ell}-1$. 

Note that when $N$ is large, $A_{\ell}(N)$ is well-approximated as $A_{\ell}(N)\approx k_1\lambda^{*^N}_{\ell}$. Under this approximation the Markov Chain becomes time-invariant.

\begin{example}
Consider the case $\ell = 2$. That is, we  forbid two consecutive zeros in
the skeleton sequence. Then the characteristic polynomial is $\lambda^2-\lambda-1=0$. The two roots are $\lambda_1=\frac{1+\sqrt{5}}{2}$ and $\lambda_2=\frac{1-\sqrt{5}}{2}$ respectively. The initial condition is $A_{\ell}(1)=2, A_{\ell}(2)=3$.
Therefore $A_{\ell}(N)=k_1\lambda_1^N+k_2\lambda_2^N$ where $k_1=\frac{3+\sqrt{5}}{2\sqrt{5}}$, $k_2=\frac{\sqrt{5}-3}{2\sqrt{5}}$. 
See also \cite[Ex.~4.7]{cover2012elements}

\end{example}

\section{A Practical Coding Scheme}
\label{sec:coding}
In this section we combine arithmetic coding and the rubber method to give an efficient feedback code for $\bsc^{fb}_{adv}(f)$ and $\bsc^{fb}(p)$. 
First we describe a modified version of arithmetic coding that will be used in our scheme. 
Consider the following pair of algorithms $({\sf Decom}_{\ell}, {\sf Com}_{\ell})$: 

\begin{mdframed}
\begin{algorithm}
\label{alg:compress}
$({\sf Decom}_{\ell}, {\sf Com}_{\ell})$
\end{algorithm}

Let $L=\lceil\log|\mathcal{A}_{\ell}^N|\rceil$. Let $\{X_i\}_{i=1}^N$ be a stochastic process that is uniformly distributed over $\mathcal{A}_{\ell}^N$. Let $(A_C,A_D)$ where $A_C:\mathcal{A}_{\ell}^N\mapsto\{0,1\}^{L+1}$ and $A_D:\{0,1\}^{L+1}\mapsto \mathcal{A}_{\ell}^N\cup\{\bot\}$
be the compression and decompression algorithms for arithmetic coding applied to $\{X_i\}_{i=1}^N$, where the decompressor outputs $\bot$ if its input
is not a valid codeword. Let $L'$ be any integer such that $L'\leq L-3$.

\noindent\emph{${\sf Decom}_{\ell}(m): \{0,1\}^{L'}\mapsto \mathcal{A}_{\ell}^N$}
\begin{enumerate}
    \item Run the decompress algorithm $A_D(m\|m')$ for all possible $m'\in\{0,1\}^{L+1-L'}$. Let the first non-$\bot$ output be $A_D(m\|m')=x^N$. If there's no such $x^N$, set $x^{N}$ to be a random sequence in $\mathcal{A}_{\ell}^{N}$.
    \item Output $x^N$.
\end{enumerate}

\noindent\emph{${\sf Com}_{\ell}(x^N):\mathcal{A}_{\ell}^N\mapsto\{0,1\}^{L'}$}:
\begin{enumerate}
    \item Output $\lfloor A_C(x^N)\rfloor_{L'}$.
\end{enumerate}
\end{mdframed}
\begin{lemma}
\label{lem:compress}
The pair of algorithms $({\sf Decom}_{\ell}, {\sf Com}_{\ell})$ described in Algorithm \ref{alg:compress} satisfies 
 \[{\sf Com}_{\ell}({\sf Decom}_{\ell}(m))=m, \forall m\in\{0,1\}^{L'}.\]

\end{lemma}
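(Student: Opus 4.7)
The plan is to unfold the definitions and reduce the statement to showing that, for every $m \in \{0,1\}^{L'}$, there exists at least one extension $m' \in \{0,1\}^{L+1-L'}$ such that $A_D(m\|m') \ne \bot$. Given this, the search loop in ${\sf Decom}_\ell$ succeeds and returns some $x^N = A_D(m\|m')$. Invertibility of arithmetic coding on valid codewords then gives $A_C(x^N) = m\|m'$, and therefore ${\sf Com}_\ell(x^N) = \lfloor A_C(x^N)\rfloor_{L'} = \lfloor m\|m'\rfloor_{L'} = m$. The only way the claim could fail is through the fall-through ``random sequence'' branch of ${\sf Decom}_\ell$, so it suffices to rule that branch out.

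To establish this existence claim, I would use the geometric picture of Shannon--Fano--Elias codes. For each $x^N \in \mathcal{A}_\ell^N$ associate the half-open interval $I(x^N) := [F(x^N),\, F(x^N) + 1/|\mathcal{A}_\ell^N|)$; since the underlying distribution on $\mathcal{A}_\ell^N$ is uniform, these tiles partition $[0,1)$ into $|\mathcal{A}_\ell^N|$ equal pieces of length $\delta := 1/|\mathcal{A}_\ell^N|$. Interpreting each $(L+1)$-bit codeword as a dyadic rational $c(x^N) := 0.A_C(x^N)$, the standard Shannon--Fano--Elias analysis recalled in Section~\ref{sec:prelim} shows that $c(x^N) \in I(x^N)$; this relies on the codeword length $L+1$ being at least $\lceil\log(1/p(x^N))\rceil + 1 = \lceil\log|\mathcal{A}_\ell^N|\rceil + 1 = L+1$. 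Conversely, the set of $(L+1)$-bit strings beginning with $m$ corresponds to the multiples of $2^{-(L+1)}$ that lie in the interval $J_m := [0.m,\, 0.m + 2^{-L'})$ of length $\Delta := 2^{-L'}$.

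The final step is to show that $J_m$ contains at least one whole tile $I(x^N)$, for then $c(x^N) \in I(x^N) \subset J_m$ yields $A_C(x^N) = m\|m'$ for some $m'$, and hence $A_D(m\|m') = x^N \ne \bot$. This is where the hypothesis $L' \le L-3$ enters. From $L = \lceil\log|\mathcal{A}_\ell^N|\rceil$ we have $|\mathcal{A}_\ell^N| > 2^{L-1}$, hence $\delta < 2\cdot 2^{-L}$, whereas $\Delta \ge 2^{-(L-3)} = 8\cdot 2^{-L}$, giving the ratio $\Delta/\delta > 4$. A one-line check then shows that any half-open subinterval $[a, a+\Delta)$ of $[0,1)$ of length strictly greater than $2\delta$ must wholly contain at least one tile of the length-$\delta$ partition: taking $k^* = \lceil a/\delta\rceil$, we have $k^*\delta \ge a$ and $(k^*+1)\delta \le a + 2\delta < a + \Delta$.

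The main obstacle is precisely this interval-covering step, since it is the only place where the numerical hypothesis $L' \le L-3$ is used, and it couples the combinatorial size of $\mathcal{A}_\ell^N$ to the allowed prefix length $L'$. The containment $c(x^N) \in I(x^N)$ is a textbook consequence of the Shannon--Fano--Elias code-length formula and needs no further work, and once $J_m \supset I(x^N)$ is secured for some $x^N$, the rest of the argument is a single appeal to the bijectivity of $A_C$ on $\mathcal{A}_\ell^N$.
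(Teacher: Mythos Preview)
Your proposal is correct and follows essentially the same approach as the paper: both reduce to showing that for every $m$ some extension $m\|m'$ is a valid Shannon--Fano--Elias codeword, and both establish this by arguing that the dyadic interval $J_m$ of length $2^{-L'}$ must fully contain at least one of the uniform probability intervals $[F(x^N),F(x^N+1))$ of length $1/|\mathcal{A}_\ell^N|\le 2^{-(L'+2)}$. Your write-up is a bit more explicit about the interval-containment step (giving the $k^*=\lceil a/\delta\rceil$ computation), but the idea and the use of $L'\le L-3$ are identical.
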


\if1\short{
\begin{proof}
Suppose that all sequences in $\mathcal{A}_{\ell}^N$ are lexicographically sorted and $x^N+1$ is the sequence following $x^N$. Note that for some binary sequences of length $L+1$, $A_D$ might output $\bot$ if the binary sequence is not a Shannon-Fano-Elias codeword for any $x^N\in\mathcal{A}_{\ell}^{N}$.


As long as there exists an $m'$ such that $A_D(m\|m')\neq \bot$, ${\sf Com}({\sf Decom}(m))=m$ due to the correctness of arithmetic coding. Therefore we only need to show that for any $m\in\{0,1\}^{L'}$, there exists $m'\in\{0,1\}^{L+1-L'}$ such that $A_D(m\|m')\neq \bot$.

We will prove that for any $m\in\{0,1\}^{L'}$, there must exist an $x^N\in\mathcal{A}_{\ell}^N$ such that $m$ is a prefix of $A_C(x^N)$. To see this, let each sequence $m$ in $\{0,1\}^{L'}$ represent an interval of length $\frac{1}{2^{L'}}$ in $[0,1]$ such that all of the real numbers inside the interval represented by $m$ have prefix $m$. Note that $A_C(x^N)$ falls between $F(x^N)$ and $F(x^N+1)$, where $F(\cdot)$ is the cumulative distribution function of $\{X_i\}_{i=1}^N$. As $X^N$ is uniformly distributed over $\mathcal{A}_{\ell}^N$, for any $x^N$, $F(x^N+1)-F(x^N) = \frac{1}{|\mathcal{A}_{\ell}^N|}\leq\frac{1}{2^{L'+2}}$. Therefore, for any $m$, the 
interval represented by $m$ with length $\frac{1}{2^{L'}}$ must contain both $F(x^N)$ and $F(x^N+1)$ for at least one $x^N        $. This indicates that $A_C(x^N)\in(F(x^N),F(x^N+1))$ must fall inside the interval represented by $m$. That is, $m$ must be a prefix of $A_C(x^N)$.

\end{proof}
}\fi

Now we describe the construction of our overall scheme:

\begin{mdframed}
\begin{construction}
\label{cst:coding}
The encoding and decoding of $\mathcal{C}_{\ell,N,R}$ are as follows:
\end{construction}

\noindent\emph{Encoding}: 
\begin{itemize}
    \item Let $m^{NR}$ be a message source of length $NR$. Find the minimum natural number $N'$ such that $\lceil\log|\mathcal{A}_{\ell}^{N'}|\rceil\geq NR+3$.
    \item Run ${\sf Decom}_{\ell}(m)$ and denote the output as $x^{N'}$. Let $x^{N'}$ be the skeleton sequence and send it through the feedback channel using the rubber method.
\end{itemize}
\emph{Decoding}: 
\begin{itemize}
    \item Let $y^N$ be the sequence received from the feedback channel. Run the decoding algorithm of the rubber method on $y$ to get $\widetilde{x}^{N'}$. If $\widetilde{x}^{N'}\notin \mathcal{A}_{\ell}^N$, set $\widetilde{x}^{N'}$ to be a random skeleton sequences in $\mathcal{A}_{\ell}^{N'}$.
    \item Otherwise, output $m'={\sf Com}_{\ell}(\widetilde{x}^{N'})$.
\end{itemize}

\end{mdframed}

\begin{proposition}
\label{thm:coding}
The code $\mathcal{C}_{\ell,N,R}$ in Construction \ref{cst:coding} is admissible for the ${\sf BSC}^{fb}_{adv}(f)$ if $N'\leq (1-(\ell+1)f)N$.
\end{proposition}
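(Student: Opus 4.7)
The plan is to prove this by decomposing the communication into two stages and invoking the results already established: first the rubber method on the skeleton sequence (Proposition~\ref{pro:rubber}), and then the arithmetic-coding correctness (Lemma~\ref{lem:compress}). Since the hypothesis $N'\le(1-(\ell+1)f)N$ rearranges to $N' + (\ell+1)fN \le N$, Proposition~\ref{pro:rubber} applied to $x^{N'}\in\mathcal{A}_{\ell}^{N'}$ directly guarantees that for any adversarial error pattern of weight at most $fN$, the rubber-method decoder outputs $\widetilde x^{N'} = x^{N'}$. In particular $\widetilde x^{N'}\in\mathcal{A}_{\ell}^{N'}$, so the random-fallback branch in the decoder of Construction~\ref{cst:coding} is never triggered.

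Next I would verify that Lemma~\ref{lem:compress} applies with the parameters chosen in Construction~\ref{cst:coding}. Set $L := \lceil\log|\mathcal{A}_\ell^{N'}|\rceil$ and $L' := NR$. By the choice of $N'$ as the smallest natural number with $\lceil\log|\mathcal{A}_\ell^{N'}|\rceil\ge NR+3$, we have $L' = NR \le L-3$, which is exactly the hypothesis required for the pair $({\sf Decom}_\ell,{\sf Com}_\ell)$ in Algorithm~\ref{alg:compress} to satisfy ${\sf Com}_\ell({\sf Decom}_\ell(m))=m$ for every $m\in\{0,1\}^{L'}$. Combining the two stages, the decoder's output is
\begin{equation*}
m' \;=\; {\sf Com}_\ell(\widetilde x^{N'}) \;=\; {\sf Com}_\ell(x^{N'}) \;=\; {\sf Com}_\ell({\sf Decom}_\ell(m)) \;=\; m,
\end{equation*}
so the code corrects every error pattern of weight at most $fN$, which is exactly the definition of admissibility for $\bsc^{fb}_{adv}(f)$.

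There is no real obstacle here: the proof is purely a bookkeeping argument that the two previously established components compose correctly, and the only thing to check beyond invoking Proposition~\ref{pro:rubber} and Lemma~\ref{lem:compress} is that the parameter $L'=NR$ chosen in the construction falls in the regime $L'\le L-3$ guaranteed by the definition of $N'$. The mild subtlety worth pointing out in the writeup is that the $+3$ slack in the condition $\lceil\log|\mathcal{A}_\ell^{N'}|\rceil\ge NR+3$ is precisely what is needed to invoke Lemma~\ref{lem:compress}, explaining the otherwise mysterious constant that appears in Construction~\ref{cst:coding}.
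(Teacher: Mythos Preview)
Your proposal is correct and follows exactly the same approach as the paper, which simply states that the result follows directly from Proposition~\ref{pro:rubber} and Lemma~\ref{lem:compress}. Your writeup merely spells out the composition of the two components and the parameter check $L'\le L-3$ in more detail than the paper does.
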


\if1\short{
\begin{proof}
Follows directly from Proposition \ref{pro:rubber} and Lemma \ref{lem:compress}.
\end{proof}
}\fi

Note that in the first step of encoding, we can find $N'$ simply by computing $\mathcal{A}_{\ell}^{\widetilde{N}}$ for $\widetilde{N}=NR+3,\dots,2NR+6$ since $2^{\frac{\widetilde{N}}{2}}\leq|\mathcal{A}_{\ell}^{\widetilde{N}}|\leq 2^{\widetilde{N}}$. See Lemma \ref{lem:AlN_lb} in Appendix.



We further note that the above coding scheme also works for stochastic feedback BSC channel $\bsc^{fb}(p)$:

\begin{proposition}
\label{cor:coding}
The sequence of codes $\{\mathcal{C}_{\ell,N,R}\}_N$, each of which is constructed as in Construction \ref{cst:coding}, is admissible for the ${\sf BSC}^{fb}(p)$ 
if $R<R_{\ell}(p) = (1-(\ell+1)p)\log\lambda^*_{\ell}$.
\end{proposition}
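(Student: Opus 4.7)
The plan is to reduce the stochastic statement to the adversarial bound from Proposition~\ref{thm:coding} via a standard concentration argument on the number of channel errors. Since $R < R_\ell(p) = (1-(\ell+1)p)\log \lambda^*_\ell$, I would first pick a slack parameter $\epsilon > 0$ small enough that
\[
R < (1-(\ell+1)(p+\epsilon))\log \lambda^*_\ell,
\]
and set $f := p + \epsilon$. This $f$ will play the role of an ``effective adversarial fraction'' that the realized BSC noise almost surely respects.

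Next I would control the skeleton length $N'$. By Lemma~\ref{lem: entropyrate} we have $\log|\mathcal{A}_\ell^{N'}| = N'\log\lambda^*_\ell + O(1)$, so the minimality of $N'$ in the encoding step forces
\[
N' = \frac{NR + O(1)}{\log \lambda^*_\ell}.
\]
Combined with the choice of $\epsilon$, this gives $N' \le (1-(\ell+1)f)N$ for all sufficiently large $N$. Hence Construction~\ref{cst:coding} meets the hypothesis of Proposition~\ref{thm:coding} with adversarial fraction $f$, so the rubber-method decoder correctly recovers the skeleton $x^{N'}$ on every error pattern of weight at most $fN$, and then Lemma~\ref{lem:compress} guarantees ${\sf Com}_\ell$ recovers $m$ exactly.

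It remains to bound the probability that the BSC produces more than $fN$ errors. Let $W \sim {\sf Bin}(N,p)$ be the number of channel flips. Since $f = p + \epsilon > p$, a standard Chernoff bound yields
\[
\Pr[W > fN] \le \exp(-N D(B(f)\|B(p))) \to 0
\]
as $N\to\infty$. Combining the two steps, the error probability $P_e(\mathcal{C}_{\ell,N,R})$ is at most $\Pr[W > fN]$, which tends to zero, establishing admissibility. The only subtlety is keeping the $O(1)$ term in the bound on $N'$ harmless, but because the inequality defining $\epsilon$ is strict, this lower-order correction is absorbed for all large $N$; no step is expected to present a real obstacle.
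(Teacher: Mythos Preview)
Your proposal is correct and follows essentially the same approach as the paper: identify the fraction of errors the rubber method can correct (using Lemma~\ref{lem: entropyrate} to control $N'$), observe that this threshold exceeds $p$ by a fixed margin, and conclude by concentration of the binomial error count. The only cosmetic difference is that the paper works with the exact correctable fraction $f_N=\frac{1}{\ell+1}(1-N'/N)\to f^*>p$ and invokes the law of large numbers, whereas you fix a slack $f=p+\epsilon$ and use a Chernoff bound; either variant works.
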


\if1\short{
\begin{proof}

The fraction of errors that can be corrected by $\mathcal{C}_{\ell,N,R}$ is 
\[f_N=\frac{1}{\ell+1}\left(1-\frac{N'}{N}\right).\]
When $N$ tends to infinity,
\[\lim_{N\rightarrow\infty}f_N=f^*=\frac{1}{\ell+1}\left(1-\frac{R}{\log\lambda^*_{\ell}}\right).\] according to Lemma \ref{lem: entropyrate}.

If the fraction of errors is less than $f_N$, then $\mathcal{C}_{\ell,N,R}$ can decode correctly. Let $E_i$ be the indicator of whether the $i$-th transmitted bit is flipped. 
The error probability of $\mathcal{C}_{\ell,N,R}$ is thus \[P_e(\mathcal{C}_{\ell,N,R})=\Pr\left[\frac{1}{N}\sum_{i=1}^NE_i\geq f_N\right].\]
The result then follows by the law of large numbers.
\end{proof}
}\fi

\section{Main Results}
\label{sec:optimal}
We now show that, for certain parameters, our codes achieve the capacity and the optimal error-exponent, second-order rate, and moderate deviations constant for certain parameters.
\subsection{Capacity}

\begin{theorem}
\label{thm:main}
For any integer $\ell\geq 2$, $R_{\ell}(p)$ is tangent to $C(\bsc^{fb}(p))$. For $p_{\ell}=\frac{1}{1+2^{(\ell+1)\log\lambda^*_{\ell}}}$, \[R_{\ell}(p_{\ell}) = C(\bsc^{fb}(p_{\ell})).\]
That is, for any $\epsilon>0$, the sequence of codes $\{\mathcal{C}_{\ell,N,R}\}_N$ as constructed in Construction \ref{cst:coding} is admissible for $\bsc^{fb}(p_{\ell})$ with $R=C(\bsc^{fb}(p_{\ell}))-\epsilon$.
\end{theorem}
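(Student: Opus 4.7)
The plan is to verify that the affine function $R_\ell(p)=(1-(\ell+1)p)\log\lambda_\ell^*$ and the concave function $C(p):=1-h(p)$ share a common tangent line at $p=p_\ell$, and then invoke Proposition~\ref{cor:coding} to conclude admissibility. Write $\lambda=\lambda_\ell^*$ throughout.

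First, I would compute derivatives: $R_\ell'(p)=-(\ell+1)\log\lambda$ (constant, since $R_\ell$ is affine in $p$) and $C'(p)=-h'(p)=\log\frac{p}{1-p}$. Equating the two slopes gives $\frac{p}{1-p}=\lambda^{-(\ell+1)}=2^{-(\ell+1)\log\lambda}$, which solves uniquely to $p=\tfrac{1}{1+2^{(\ell+1)\log\lambda}}=p_\ell$. So the slopes agree exactly at the designated point.

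Next, I would extract a clean algebraic identity from the characteristic equation~(\ref{eqn:character}). Multiplying $\lambda^\ell=\lambda^{\ell-1}+\cdots+1$ by $(\lambda-1)$ gives $\lambda^{\ell+1}=2\lambda^\ell-1$, equivalently
\begin{equation*}
1+\lambda^{\ell+1}=2\lambda^\ell.
\end{equation*}
This yields the remarkably simple expressions $p_\ell=\frac{1}{2\lambda^\ell}$ and $1-p_\ell=\frac{\lambda}{2}$. With these in hand, a direct expansion
\begin{equation*}
h(p_\ell)=\tfrac{1}{2\lambda^\ell}\log(2\lambda^\ell)+\tfrac{\lambda}{2}\log\tfrac{2}{\lambda}
\end{equation*}
combined with the identity $1-\tfrac{1}{2\lambda^\ell}=\tfrac{\lambda}{2}$ rearranges $1-h(p_\ell)$ precisely into $(1-(\ell+1)p_\ell)\log\lambda=R_\ell(p_\ell)$. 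Thus the two curves agree in both value and slope at $p=p_\ell$, which is the tangency claim; because $C$ is strictly concave and $R_\ell$ is affine, this tangency also implies $R_\ell(p)\le C(p)$ everywhere with equality only at $p_\ell$.

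Finally, for the admissibility part, note that the choice $R=C(\bsc^{fb}(p_\ell))-\epsilon=R_\ell(p_\ell)-\epsilon<R_\ell(p_\ell)$ satisfies the hypothesis of Proposition~\ref{cor:coding} with parameter $p=p_\ell$, so $\{\mathcal{C}_{\ell,N,R}\}_N$ is admissible for $\bsc^{fb}(p_\ell)$. The main (and essentially only) nontrivial step is the algebraic verification that $R_\ell(p_\ell)=1-h(p_\ell)$, and this entirely hinges on the identity $1+\lambda^{\ell+1}=2\lambda^\ell$ derived above; once that is available, every remaining step is routine manipulation.
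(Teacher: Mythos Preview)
Your proof is correct. The paper's own proof takes a different, much shorter route: it simply cites Theorem~2 of \cite{ahlswedenon} for the tangency of $R_\ell(p)$ to $C(\bsc^{fb}(p))$, cites Section~3.6 of \cite{berlekamp1968block} for the equality $R_\ell(p_\ell)=C(\bsc^{fb}(p_\ell))$, and then invokes Proposition~\ref{cor:coding} for the admissibility claim. In contrast, you give a fully self-contained elementary verification: you equate slopes to locate $p_\ell$, extract the identity $1+\lambda^{\ell+1}=2\lambda^\ell$ from the characteristic equation~(\ref{eqn:character}), deduce the closed forms $p_\ell=\tfrac{1}{2\lambda^\ell}$ and $1-p_\ell=\tfrac{\lambda}{2}$, and plug these in to check $1-h(p_\ell)=R_\ell(p_\ell)$ directly. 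Your approach is more transparent and does not require the reader to chase external references; the paper's approach is terser but depends on those sources for the substance of the argument. Both arrive at the same final step, namely the appeal to Proposition~\ref{cor:coding}.
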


\if1\short{
\begin{proof}

Note that according to Theorem 2 of \cite{ahlswedenon}, $R_{\ell}(p)$ is tangent to $C(\bsc^{fb}(p))$. Moreover according to Section 3.6 of \cite{berlekamp1968block}, when $p=p_{\ell}$, $R_{\ell}(p_{\ell})=C(\bsc^{fb}(p_{\ell}))$. The result then follows from Proposition \ref{cor:coding}.

\end{proof}
}\fi

We call $p_{\ell}$ the \emph{tangent points} and $R^*_{\ell}=R_{\ell}(p_{\ell})$ the \emph{tangent rates}. The tangent points $p_{\ell}$, tangent rates $R^*_{\ell}$, and $\log\lambda^*_{\ell}$ values for different $\ell$ are listed in Table \ref{tab:tangent}.
\begin{table}
    \centering
    \begin{tabular}{c|c|c|c}
         $\ell$ & $\log\lambda^*_{\ell}$ & $p_{\ell}$ & $R^*_{\ell}$ \\
         \hline
          2 & 0.6942 & 0.1910 & 0.2965\\
          3 & 0.8791 & 0.0804 & 0.5965\\
          4 & 0.9468 & 0.0362 & 0.7754
    \end{tabular}
    \caption{Numerical results of $\log\lambda^*_{\ell}$, tangent points $p_{\ell}$ and tangent rates $R^*_{\ell}$ }
    \label{tab:tangent}
\end{table}

The function $R_{\ell}(p)$ for different $\ell$ is plotted in Figure \ref{fig:rubber}.
That the rubber method would achieve the capacity of the $\bsc^{fb}(p_{\ell})$ is 
implicit in \cite{ahlswedenon}. We consider three more-refined performance
measures. 
\begin{figure}
\centering
\begin{tikzpicture}[trim axis left]
\begin{axis}[xmin=0, xmax=1/2, ymin=0, ymax=1, 
    xlabel={$p$}, ylabel={$R$},
     xtick = {0,1/5,1/4,1/3,1/2},xticklabels={$0$, $\frac{1}{5}$, $\frac{1}{4}$,$\frac{1}{3}$, $\frac{1}{2}$}]
     ytick = {0,1}, scale=0.7, restrict y to domain=-1:1]
\addplot [black, thick, samples=300, smooth,domain=0:0.5] {1+x*ln(x)/ln(2)+(1-x)*ln(1-x)/ln(2)};
\addlegendentry{$C(\bsc^{fb}(p))$}
\addplot [dotted, thick, samples=300, smooth,domain=0:1/3] {(1-3*x)*ln((1+sqrt(5))/2)/ln(2)};
\addlegendentry{$R_2(p)$}
\addplot [dashed,thick, samples=300, smooth,domain=0:1/4] {(1-4*x)*0.8791};
\addlegendentry{$R_3(p)$}
\addplot [dashdotted,thick, samples=300, smooth,domain=0:1/5] {(1-5*x)*0.9468};
\addlegendentry{$R_4(p)$}
\end{axis}
\end{tikzpicture}
\caption{$R_{\ell}(p)$ for different $\ell$.}
\label{fig:rubber}
\end{figure}
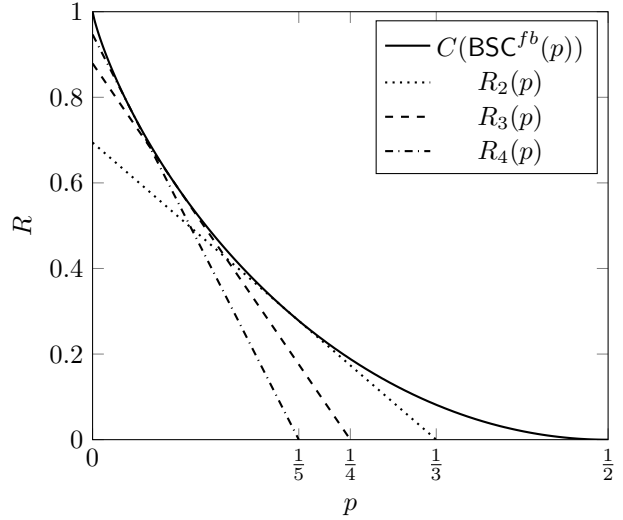

\subsection{Error-exponent}
\begin{lemma}[Sphere-packing bound with pre-factor \cite{Elias:London,altuug2020exact}]
Let $\{\mathcal{C}_{N,R}\}_N$ be a sequence of codes for the ${\sf BSC}^{fb}(p)$, each with rate $R<C({\sf BSC}^{fb}(p))$. Let $q\in(0,\frac{1}{2})$ s.t. $R=1-h(q)$. Let $E_{sp}(R)=D(B(q)\|B(p))$ and $E'_{sp}(R)$ be the slope of the error exponent at $R$. 
Then the error probability $P_e(\mathcal{C}_{N,R})$ satisfies
\[P_e(\mathcal{C}_{N,R})\geq \frac{K_1}{N^{\frac{1}{2}(1+|E'_{sp}(R)|)}}e^{-NE_{sp}(R)},\]
where $K_1$ is a positive constant depending on $R$.
\label{thm:opt_error_exp}
\end{lemma}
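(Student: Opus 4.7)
The plan is to establish the lower bound by reducing the coding problem to a binary hypothesis test between the true channel $\bsc^{fb}(p)$ and an auxiliary channel $\bsc^{fb}(q)$ via a meta-converse, to exploit the symmetry of the BSC to eliminate the role of feedback in the resulting test, and then to apply a sharp (Bahadur--Rao / Berry--Esseen type) large-deviations estimate to extract the stated polynomial pre-factor.

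First, I would fix $q\in(0,1/2)$ with $R=1-h(q)$ and, for any code $(f,g)$, let $\mathcal{P}_p$ and $\mathcal{P}_q$ denote the joint distributions of $(M,X^N,Y^N)$ induced when the channel is $\bsc(p)$ and $\bsc(q)$ respectively. The Polyanskiy--Poor--Verd\'u meta-converse yields
\[
|\mathcal{M}|\cdot \beta_{1-P_e(\mathcal{C}_{N,R})}\!\bigl(\mathcal{P}_p,\mathcal{P}_q\bigr)\;\leq\; 1,
\]
where $\beta_\alpha$ is the minimum Type-II error of a binary test with Type-I error at most $1-\alpha$. Rearranged, this says $\beta_{1-P_e}(\mathcal{P}_p,\mathcal{P}_q)\leq 2^{-NR}$, which in turn lower-bounds $P_e$ through the inverse of the Neyman--Pearson curve.

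Second, the critical reduction is that for the BSC the log-likelihood ratio $\log(d\mathcal{P}_p/d\mathcal{P}_q)$ depends on $(X^N,Y^N)$ only through the noise vector $E^N=Y^N\oplus X^N$, whose coordinates are i.i.d.\ $B(p)$ under $\mathcal{P}_p$ and i.i.d.\ $B(q)$ under $\mathcal{P}_q$, \emph{irrespective} of the adaptive feedback encoder $\{f_i\}$. Hence by the data-processing inequality for the Neyman--Pearson function,
\[
\beta_{1-P_e}\!\bigl(\mathcal{P}_p,\mathcal{P}_q\bigr)\;=\;\beta_{1-P_e}\!\bigl(B(p)^{\otimes N},B(q)^{\otimes N}\bigr),
\]
and the problem reduces to a purely i.i.d.\ binary hypothesis test to which feedback contributes nothing.

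Third, I would evaluate $\beta_\alpha(B(p)^{\otimes N},B(q)^{\otimes N})$ at the rate-matched threshold. Writing the log-likelihood ratio as $S_N=\sum_{i=1}^N Z_i$ with $Z_i$ bounded i.i.d., the Chernoff exponent at the optimum tilt gives the leading factor $e^{-NE_{sp}(R)}$ with $E_{sp}(R)=D(B(q)\|B(p))$. A Bahadur--Rao refinement, obtained by an exponential change of measure to the tilted distribution and then a Berry--Esseen expansion of the resulting standardized sum at the threshold, contributes a factor $N^{-1/2}$ from the local CLT. The additional $N^{-|E'_{sp}(R)|/2}$ factor arises because the threshold must be adjusted to match $\log|\mathcal{M}|=NR$ rather than sitting exactly at the Chernoff saddle point, and the slope $|E'_{sp}(R)|=\log\frac{(1-p)q}{p(1-q)}$ measures the exponential sensitivity of the Chernoff exponent to that shift; combining these yields the pre-factor $N^{-(1+|E'_{sp}(R)|)/2}$.

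The main obstacle will be the precise Berry--Esseen bookkeeping producing the full exponent $(1+|E'_{sp}(R)|)/2$ rather than the naive $1/2$: one needs to track the lattice correction for Binomial tails, the derivative of the Legendre transform at the saturation point, and the fact that $\beta_{1-P_e}$ is only known to lie between consecutive values of the Binomial cumulative mass function. This is exactly the refinement carried out in the Altu\u{g}--Wagner sphere-packing analysis, which I would invoke. Once the symmetry-based reduction to i.i.d.\ hypothesis testing is in place, feedback plays no further role, so no separate feedback-aware argument is needed.
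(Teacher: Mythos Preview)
The paper does not prove this lemma at all: it is stated with attribution to \cite{Elias:London,altuug2020exact} and used as a black box, with no argument supplied. There is therefore nothing in the paper to compare your proposal against.

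On the substance of your sketch: the reduction you describe is essentially the strategy of the cited Altu\u{g}--Wagner paper, so in that sense you are reconstructing the reference rather than giving an alternative. The first two steps are sound, and in particular the observation that the noise sequence $E^N$ is i.i.d.\ $B(p)$ independently of the adaptive encoder is exactly why feedback can be eliminated from the converse. Your explanation of the third step, however, is imprecise. The extra $N^{-|E'_{sp}(R)|/2}$ does not arise merely from ``adjusting the threshold to match $\log|\mathcal{M}|$'' within a single Bahadur--Rao expansion. In the BSC setting the cleanest derivation is the Elias packing argument: one shows that some decoding region has volume at most $2^{N(1-R)}$, compares it to a Hamming ball of the same volume, and then lower-bounds $P_e$ by the binomial tail beyond that ball's radius. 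The radius $r_N$ satisfying $\sum_{i\le r_N}\binom{N}{i}\le 2^{N(1-R)}$ is $qN-\Theta(\log N)$ because of the $1/\sqrt{N}$ in the Stirling estimate of the ball volume; it is this $\Theta(\log N)$ shift in the threshold, fed through the exponential tilt whose slope is $|E'_{sp}(R)|$, that produces the additional $N^{-|E'_{sp}(R)|/2}$. Your write-up conflates this mechanism with the routine Bahadur--Rao $1/\sqrt{N}$, so as written the bookkeeping would not actually deliver the claimed exponent.
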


\begin{theorem}
\label{thm:error_exp}
For any fixed $\ell\geq 2$, consider the sequence of codes $\{\mathcal{C}_{\ell,N,R^*_{\ell}}\}_N$ at the tangent rate $R^*_{\ell}$. That is, $R^*_{\ell} = R_{\ell}(p_{\ell})=1-h(p_{\ell})$. Then for the $\bsc^{fb}(p)$ with $p<p_{\ell}$, $\{\mathcal{C}_{\ell,N,R^*_{\ell}}\}_N$ at rate $R^*_{\ell}$ achieves optimal error exponent
\[P_e(\mathcal{C}_{\ell,N,R^*_{\ell}})\leq O\left(\frac{1}{\sqrt{N}}\right)e^{-N\cdot E_{sp}(R)}.\]
In particular,
\[\lim_{N\rightarrow\infty}-\frac{1}{N}\log P_e(\mathcal{C}_{\ell,N,R^*_{\ell}})= E_{sp}(R).\]
\end{theorem}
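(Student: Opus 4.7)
The plan is to reduce the decoding error probability to a binomial tail probability, pin down the decoding threshold $f_N$ precisely, and then invoke sharp Cram\'er large-deviation asymptotics to recover the $1/\sqrt{N}$ pre-factor.

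\textbf{Reduction to a binomial tail.} By Proposition~\ref{thm:coding}, the code $\mathcal{C}_{\ell,N,R^*_\ell}$ decodes correctly whenever the number of channel flips does not exceed $f_N N$, where $f_N = \tfrac{1}{\ell+1}(1 - N'/N)$. Letting $E_i$ be the indicator that the $i$-th transmitted bit is flipped by the $\bsc^{fb}(p)$, so that $\sum_i E_i \sim \mathrm{Bin}(N,p)$, we obtain
\[
P_e(\mathcal{C}_{\ell,N,R^*_\ell}) \;\le\; \Pr\!\left[\mathrm{Bin}(N,p) > f_N N\right].
\]

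\textbf{Locating $f_N$.} Lemma~\ref{lem: entropyrate} gives $\log|\mathcal{A}_\ell^{N'}| = N' \log\lambda^*_\ell + \log k_1 + o(1)$. Combined with the defining rule $N' = \min\{n:\lceil\log|\mathcal{A}_\ell^n|\rceil \ge NR^*_\ell + 3\}$, this yields $N' = NR^*_\ell/\log\lambda^*_\ell + O(1)$. Since $R^*_\ell = (1-(\ell+1)p_\ell)\log\lambda^*_\ell$ by Theorem~\ref{thm:main}, we get $f_N = p_\ell + O(1/N)$.

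\textbf{Sharp tail bound and identification of the exponent.} For every fixed $p<p_\ell$, the Bahadur--Rao exact asymptotics for Cram\'er large deviations yield
\[
\Pr[\mathrm{Bin}(N,p) \ge p_\ell N] \;=\; \frac{K(1+o(1))}{\sqrt{N}}\, e^{-N D(B(p_\ell)\|B(p))}
\]
for an explicit positive constant $K=K(p,p_\ell)$. The $O(1/N)$ slack between $f_N$ and $p_\ell$ amounts to shifting the threshold $f_N N$ by only $O(1)$ integers; by the local central limit theorem each such term carries mass $\Theta(N^{-1/2})\,e^{-ND(B(p_\ell)\|B(p))}$, so the shift changes the tail bound by at most an $O(1)$ multiplicative factor. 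This gives $P_e(\mathcal{C}_{\ell,N,R^*_\ell}) \le O(N^{-1/2})\,e^{-ND(B(p_\ell)\|B(p))}$. By definition $E_{sp}(R) = D(B(q)\|B(p))$ where $R = 1-h(q)$; since $R^*_\ell = 1-h(p_\ell)$, we have $q = p_\ell$ and hence $E_{sp}(R^*_\ell) = D(B(p_\ell)\|B(p))$, establishing the displayed inequality. The matching $\liminf_N -\tfrac{1}{N}\log P_e \ge E_{sp}(R^*_\ell)$, giving the final limit statement, is immediate from the sphere-packing lower bound of Lemma~\ref{thm:opt_error_exp}.

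The main technical obstacle is obtaining the $O(N^{-1/2})$ pre-factor: a crude Chernoff bound already delivers the correct exponent but with pre-factor $1$, so the real work is upgrading to the sharp Cram\'er/Bahadur--Rao asymptotics and simultaneously absorbing the $O(1/N)$ fluctuation of $f_N$ about $p_\ell$ into an $O(1)$ constant inside those asymptotics.
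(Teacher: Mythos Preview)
Your proposal is correct and follows essentially the same route as the paper: reduce to a binomial tail, show $f_N=p_\ell+O(1/N)$ from the asymptotics of $|\mathcal{A}_\ell^{N'}|$, apply sharp Cram\'er-type asymptotics to the tail, and appeal to sphere packing for the matching converse in the limit statement. The only cosmetic difference is that the paper packages the tail estimate as its own Lemma~\ref{lem:tail} (proved via the local CLT in the style of Arratia--Gordon, applied directly at the moving threshold $f_N$ and then using continuity of $D(B(\cdot)\|B(p))$), whereas you cite Bahadur--Rao at the fixed threshold $p_\ell N$ and absorb the $O(1)$ shift separately; these are equivalent.
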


\begin{remark}
The ``pre-factor'' order achieved by our scheme is $O(\frac{1}{\sqrt{N}})$, which 
is worse than the optimal order of $O(\frac{1}{N^{\frac{1}{2}(1+|E'_{sp}(R)|)}})$ in Theorem~\ref{thm:opt_error_exp}. Interestingly, for the binary erasure
channel (BEC), both with and without feedback, the optimal pre-factor is $O(\frac{1}{\sqrt{N}})$~\cite[Theorem~2]{altuug2020exact}. 
Rubber coding attempts to emulate a BEC using the BSC, which might explain
this connection. A similar gap from strict optimality occurs in the second-order
coding rate results to follow. Making the connection between rubber coding
and the BEC more precise is an interesting topic for
future study.
\label{remark:suboptimal}
\end{remark}

\if1\short{
\begin{proof}
Let $R_0=\log\lambda^*_{\ell}$. Let $f_N=\frac{1}{\ell+1}(1-\frac{N'}{N})$ be the fraction of errors $\mathcal{C}_{\ell,N,R^*_{\ell}}$ can correct. Since $\lceil\log|\mathcal{A}_{\ell}^{N'}|\rceil\geq NR_N+3$, and  $\lceil\log|\mathcal{A}_{\ell}^{N'-1}|\rceil< NR_N+3$, we have
\[\frac{N'}{N}\leq \frac{R^*_{\ell}}{\log\lambda^*_{\ell}}+O\left(\frac{1}{N}\right)
.\]
This indicates that 
\[f_N=\frac{1}{\ell+1}\left(1-\frac{N'}{N}\right)=p_{\ell}+O\left(\frac{1}{N}\right).\]

If the number of errors is less than $Nf_N$, $\mathcal{C}_{\ell,N,R^*_{\ell}}$ can correctly decode the message. Define $r_N = \frac{p}{f_N}\frac{1-f_N}{1-p}$. Let $E_i$ be the indicator random variable of whether the $i$-th bit is flipped. 
By Lemma~\ref{lem:tail}, when $N$ is large, the error probability $P_e(\mathcal{C}_{\ell,N,R^*_{\ell}})$ satisfies

\begin{align*}
   &P_e(\mathcal{C}_{\ell,N,R^*_{\ell}})=\Pr\left[\sum_{i=1}^NE_i\geq Nf_N\right]\\
   \leq &\frac{e^{-ND(B(f_N)\|B(p))}}{\sqrt{2\pi f_N(1-f_N)N}}
    \left(a_N+o\left(\frac{1-r_N^{\lfloor(1-f_N)N\rfloor+1}}{1-r_N}\right)\right),
\end{align*}
where
\begin{align*}
    a_N=\frac{1-r_N^{\lfloor(1-f_N)N\rfloor+1}\exp{-(\frac{\lfloor(1-f_N)N\rfloor+1}{2f_N(1-f_N)N})}}{1-r_N\exp{(-\frac{1}{2f_N(1-f_N)N})}}.
\end{align*}
Since $D(B(\cdot)\|B(p))$ is continuous, \[D(B(f_N)\|B(p))=D(B(p_{\ell})\|B(p))+O\left(\frac{1}{N}\right).\]
Therefore,
\begin{align*}
    P_e(\mathcal{C}_{\ell,N,R^*_{\ell}})&\leq O\left(\frac{1}{\sqrt{N}}\right)e^{-N(E_{sp}(R)+O(\frac{1}{N}))}\\
    &=O\left(\frac{1}{\sqrt{N}}\right)e^{-NE_{sp}(R)}.
\end{align*}
\end{proof}
}\fi

\subsection{Second-order Rate}
\begin{lemma}[Second-order coding rate: Theorem 15, \cite{5961844}]
Given a block length $N$ and an $\epsilon$ such that $0<\epsilon<1$, the largest possible rate of a code for the ${\sf BSC}^{fb}(p)$ with error probability less than or equal to $\epsilon$ is 
\[C-\frac{1}{\sqrt{N}}\sqrt{p(1-p)\log^2\frac{1-p}{p}}\Phi^{-1}(1-\epsilon)+\frac{\log N}{2N}+o(1),\]
where $\Phi$ denotes the standard Gaussian distribution.
\end{lemma}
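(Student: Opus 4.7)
The plan is to establish matching achievability and converse bounds, exploiting the symmetry of the BSC to reduce the feedback problem to a binary hypothesis test. Because the BSC is symmetric and memoryless, I expect feedback not to improve the second-order rate; the key is to make that intuition quantitative.

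For achievability, I would use a random-coding construction that ignores feedback: draw each codeword uniformly and independently from $\{0,1\}^N$, and decode by minimum Hamming distance. Under a uniform input, the information density for the BSC is an affine function of the Hamming weight of the noise vector, so it is a sum of i.i.d.\ Bernoulli contributions. Applying the Berry--Esseen theorem to this sum, together with a standard random-coding union bound (Feinstein's lemma or the $\kappa\beta$ bound applied in the non-feedback direction), gives the matching achievability expression $C-\sqrt{V/N}\,\Phi^{-1}(\epsilon)+O(\log N / N)$ with $V=p(1-p)\log^2\frac{1-p}{p}$. The $+\frac{\log N}{2N}$ refinement requires an Edgeworth correction to Berry--Esseen; I would invoke the lattice-case expansion for sums of integer-valued i.i.d.\ variables to pick up the correct constant $1/2$ (as opposed to the generic $-1/2$ for continuous-output channels).

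For the converse, I would use the meta-converse of Polyanskiy--Poor--Verd\'{u}, which remains valid under feedback. The meta-converse reduces any $(N,M,\epsilon)$ feedback code to a binary hypothesis test between the true channel law and an auxiliary output distribution $Q_{Y^N}$. Choosing $Q_{Y^N}$ to be the uniform distribution over $\{0,1\}^N$, the test statistic is again an affine function of the Hamming weight of the noise, and by symmetry the worst case input is uniform. This reduces the converse to evaluating $\beta_{1-\epsilon}$ between $\mathrm{Bin}(N,p)$ and $\mathrm{Bin}(N,1/2)$, which one attacks with the same Edgeworth expansion used above. Crucially, the non-adaptivity of the auxiliary output distribution $Q_{Y^N}=\mathrm{Uniform}$ means the meta-converse bound is the \emph{same} with or without feedback, which is exactly how one sees that feedback does not improve the second-order rate for the BSC.

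The main obstacle is obtaining the precise $+\frac{\log N}{2N}$ third-order term with the correct sign. Berry--Esseen alone only yields an $O(\log N / N)$ remainder without identifying its leading coefficient, and for lattice-valued random variables the na\"{i}ve Edgeworth expansion must be combined with a careful analysis of the jump correction $\frac{1}{2}$ at integer boundaries (as in Bahadur--Ranga Rao type results). A secondary technical point is justifying that adaptive encoders cannot beat this bound: one must verify that the meta-converse auxiliary output distribution can indeed be taken to be i.i.d.\ uniform even when the encoder depends on past $Y^i$, which follows by a standard change-of-measure argument using the fact that under $Q_{Y^N}=\mathrm{Uniform}$ the outputs are independent of the inputs.
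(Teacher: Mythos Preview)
The paper does not prove this lemma at all: it is stated as a citation of Theorem~15 in \cite{5961844} and used as a black box to benchmark the authors' scheme in Theorem~\ref{thm:second_order_rate}. There is therefore no ``paper's own proof'' to compare your proposal against.

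That said, your sketch is broadly the correct route and matches the approach of the cited reference: random coding with a Berry--Esseen/Edgeworth analysis of the information density for achievability, and the meta-converse with the i.i.d.\ uniform auxiliary output distribution for the converse. Your observation that the uniform $Q_{Y^N}$ renders the meta-converse bound insensitive to adaptive encoding is exactly the mechanism by which feedback is shown not to help at second order for the BSC. The one place to be careful is the $+\tfrac{\log N}{2N}$ term: for the BSC this comes not from a generic Edgeworth correction but from a combinatorial refinement specific to Hamming spheres (counting codewords exactly via binomial coefficients and using Stirling to extract the $\tfrac{1}{2}\log N$), so your plan to invoke a lattice Edgeworth expansion would need to be made precise to recover the constant $\tfrac{1}{2}$ rather than merely $O(\log N/N)$.
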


\begin{theorem}
\label{thm:second_order_rate}
For any fixed $\ell\geq 2$, consider the  $\bsc^{fb}(p)$ with cross-over probability $p=p_{\ell}$. Fix $\epsilon\in(0,1)$, and let $R(N,\epsilon)$ denote the largest possible rate $R$ such that $\mathcal{C}_{\ell,N,R(N,\epsilon)}$ has error probability at most $\epsilon$, and let $C$ denote the capacity of the $\bsc^{fb}(p)$. Then for large $N$,

\begin{align*}
    &R(N,\epsilon)\\
    \geq &C-\frac{1}{\sqrt{N}}\sqrt{p(1-p)\log^2\frac{1-p}{p}}\Phi^{-1}(1-\epsilon)-O\left(\frac{1}{N}\right).
\end{align*}

\end{theorem}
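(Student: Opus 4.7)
The plan is to invert the tail bound used in the error-exponent proof and read off the largest admissible rate. Recall from Construction~\ref{cst:coding} that $\mathcal{C}_{\ell,N,R}$ correctly decodes whenever the number of channel flips is less than $Nf_N$, with $f_N = \frac{1}{\ell+1}(1-N'/N)$. Because $\lceil \log|\mathcal{A}_\ell^{N'}|\rceil \geq NR+3$ and (by Lemma~\ref{lem: entropyrate}) $\log|\mathcal{A}_\ell^{N'}| = N'\log\lambda^*_\ell + O(1)$, the smallest feasible $N'$ satisfies $N' = NR/\log\lambda^*_\ell + O(1)$, so I get the clean relationship
\begin{equation*}
  f_N \;=\; \tfrac{1}{\ell+1}\bigl(1 - R/\log\lambda^*_\ell\bigr) \;+\; O(1/N).
\end{equation*}
Inverting this, $R = (1-(\ell+1)f_N)\log\lambda^*_\ell + O(1/N)$.

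Next I would choose $R$ so that the failure probability $\Pr[\sum_{i=1}^{N} E_i \geq Nf_N]$ for $E_i\sim B(p_\ell)$ i.i.d.\ is at most $\epsilon$. By the Berry--Esseen theorem applied to the binomial distribution,
\begin{equation*}
  \Pr\!\left[\tfrac{1}{N}\sum_{i=1}^{N} E_i \geq f_N\right] \;=\; 1 - \Phi\!\left(\sqrt{N}\,\tfrac{f_N - p_\ell}{\sqrt{p_\ell(1-p_\ell)}}\right) + O(1/\sqrt{N}),
\end{equation*}
so the largest admissible $f_N$ at error level $\epsilon$ is $f_N = p_\ell + \sqrt{p_\ell(1-p_\ell)/N}\,\Phi^{-1}(1-\epsilon) + O(1/N)$. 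Substituting this into the formula for $R$ above yields
\begin{equation*}
  R(N,\epsilon) \;\geq\; (1-(\ell+1)p_\ell)\log\lambda^*_\ell \;-\; (\ell+1)\log\lambda^*_\ell\,\sqrt{\tfrac{p_\ell(1-p_\ell)}{N}}\,\Phi^{-1}(1-\epsilon) \;+\; O(1/N).
\end{equation*}

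The final step is the key identity that makes the constant match the optimal dispersion. From the definition $p_\ell = 1/(1+2^{(\ell+1)\log\lambda^*_\ell})$ one reads off $(1-p_\ell)/p_\ell = 2^{(\ell+1)\log\lambda^*_\ell}$, i.e.\ $(\ell+1)\log\lambda^*_\ell = \log\frac{1-p_\ell}{p_\ell}$. Combined with $C = (1-(\ell+1)p_\ell)\log\lambda^*_\ell = 1-h(p_\ell)$ from Theorem~\ref{thm:main}, this converts the above bound exactly into the claimed form.

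The main obstacle is controlling the $O(1/N)$ slack in the Berry--Esseen step and in the ceiling-induced $N'$ discretization simultaneously, so as to absorb everything into a single $O(1/N)$ remainder rather than something worse. In particular, the sub-Gaussian tail expansion I used in the proof of Theorem~\ref{thm:error_exp} (via Lemma~\ref{lem:tail}) already supplies the $1/\sqrt{N}$ prefactor; what I need here is instead a lower tail estimate for the quantile of ${\sf Bin}(N,p_\ell)$, which follows from Berry--Esseen combined with smoothness of the Gaussian density near $\Phi^{-1}(1-\epsilon)$. Note that the absence of the third-order $\frac{\log N}{2N}$ term in my bound matches the suboptimality flagged in Remark~\ref{remark:suboptimal}: the quantization loss in $N'$ is exactly the $\Theta(1/N)$ discrepancy preventing a tight third-order result.
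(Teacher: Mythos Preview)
Your proposal is correct and follows essentially the same approach as the paper's proof. Both arguments use the relation $N' = NR/\log\lambda^*_\ell + O(1)$ from Lemma~\ref{lem: entropyrate}, the Berry--Esseen bound for the binomial tail (with the $O(1/\sqrt{N})$ probability error becoming $O(1/N)$ in $f_N$ after dividing by $\sqrt{N}$), and the tangency identity $(\ell+1)\log\lambda^*_\ell = \log\frac{1-p_\ell}{p_\ell}$; the only cosmetic difference is that the paper fixes $R_N = C - \sqrt{p(1-p)}\,\log\frac{1-p}{p}\,\Phi^{-1}(1-\epsilon)/\sqrt{N} - c_0/N$ upfront and verifies $P_e < \epsilon$ by Taylor-expanding $\Phi$ around $\Phi^{-1}(1-\epsilon)$ and choosing $c_0$ large enough, whereas you solve for the admissible threshold $f_N$ first and then convert back to $R$.
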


\begin{remark}
Note the $\log N/N$ term is ``missing'' from the expansion in Theorem~\ref{thm:second_order_rate}. See Remark~\ref{remark:suboptimal}.
\end{remark}

\if1\short{
\begin{proof}

Let 
\[R_N=C-\frac{1}{\sqrt{N}}\sqrt{p(1-p)\log^2\frac{1-p}{p}}\Phi^{-1}(1-\epsilon)-\frac{c_0}{N},\] where $c_0$ is a positive constant which we will specify later. We now show that for sufficiently large $N$, the error probability of  $\mathcal{C}_{\ell,N,R_N}$ satisfies $P_e(\mathcal{C}_{\ell,N,R_N})< \epsilon$.

Let $e^*_N=\frac{N}{\ell+1}(1-\frac{N'}{N})$ denote the number of errors that $\mathcal{C}_{\ell,N,R_N}$ is capable of correcting. According to our construction, $\lceil\log|\mathcal{A}_{\ell}^{N'}|\rceil\geq NR_N+3$, and $\lceil\log|\mathcal{A}_{\ell}^{N'-1}|\rceil< NR_N+3$, we have
\[\frac{N'}{N}\leq \frac{R_N}{\log\lambda^*_{\ell}}+\frac{c_1}{N}+o\left(\frac{1}{N}\right)
,\]
where $c_1=\frac{3-\frac{1}{2}\log k_1}{\log\lambda^*_{\ell}}$. Therefore
\[e_N^*\geq \frac{N}{\ell+1}\left(1-\frac{R}{\log\lambda^*_{\ell}}\right)-\frac{c_1}{\ell+1}-o(1).\]
Let $E_i$ be the random variable such that $E_i=1$ if the $i$-th bit is flipped. Let $\Psi_N$ be the c.d.f. of the binomial distribution ${\sf Bin}(N,p)$. According to Berry–Esseen theorem (Section 5, \cite{bhattacharya2007basic}), for any $N$, for any $x$,

\[\left|\Psi_N(x\sigma\sqrt{N}+Np)-\Phi(x)\right|\leq \frac{c_2}{\sqrt{N}},\]
where $\Phi$ is the c.d.f. of standard Gaussian and $\sigma = \sqrt{p(1-p)}$, $c_2=\frac{0.56p}{\sigma^3}$. Therefore
\begin{align*}
    &P_e(\mathcal{C}_{\ell,N,R_N})\leq 1-\Psi_N(e_N^*)\\
    \leq&1-\Psi_N\left(\frac{N}{\ell+1}\left(1-\frac{R_N}{\log\lambda^*_{\ell}}\right)-\frac{c_1}{\ell+1}-o(1)\right)\\
    \leq&1-\Phi\left(\frac{1}{\sigma\sqrt{N}}\left(\frac{N}{\ell+1}\left(1-\frac{R_N}{\log\lambda^*_{\ell}}\right)\right.\right.\\
    &-\left.\left.\frac{c_1}{\ell+1}-o(1)-Np\right)\right)+\frac{c_2}{\sqrt{N}}.
\end{align*}
Let $R_0=\log\lambda^*_{\ell}$. Note that
\begin{align*}
    &\frac{1}{\sigma\sqrt{N}}\left(\frac{N}{\ell+1}\left(1-\frac{R_N}{\log\lambda^*_{\ell}}\right)-\frac{c_1}{\ell+1}-o(1)-Np\right)\\
    =&\frac{1}{\sigma\sqrt{N}}\frac{N}{\ell+1}\left[1-(\ell+1)p-\frac{C}{R_0}\right.\\
    &\left.+\frac{\sigma}{\sqrt{N}}(\ell+1)\Phi^{-1}(1-\epsilon)\right]\\
    &+\frac{1}{\sigma\sqrt{N}}\left(\frac{c_0}{(\ell+1)R_0}-\frac{c_1}{(\ell+1)}-o(1)\right)\\
    =&\Phi^{-1}(1-\epsilon)+\frac{1}{\sigma\sqrt{N}}\left(\frac{c_0}{(\ell+1)R_0}-\frac{c_1}{(\ell+1)}-o(1)\right),I'
\end{align*}
where the first equality comes from the fact that when $p=p_{\ell}$,  $\log\frac{1-p}{p}=R_0(\ell+1)$. See Section 3.6 in \cite{berlekamp1968block}. The second equality comes from the fact that $C=(1-(\ell+1)p)R_0$. Therefore 
\begin{align*}
    &P_e(\mathcal{C}_{\ell,N,R_N})\\
    \leq &1-\Phi\left(\Phi^{-1}(1-\epsilon)\right.\\
    &\left.+\frac{1}{\sigma\sqrt{N}}\left(\frac{c_0}{(\ell+1)R_0}-\frac{c_1}{(\ell+1)}-o(1)\right)\right)+\frac{c_2}{\sqrt{N}}\\
    =&1-\left[\Phi(\Phi^{-1}(1-\epsilon))+\frac{\Phi'(\Phi^{-1}(1-\epsilon))}{\sqrt{N}}\left(\frac{c_0}{(\ell+1)R_0}\right.\right.\\
    &\left.\left.-\frac{c_1}{(\ell+1)}\right)\right]+o\left(\frac{1}{\sqrt{N}}\right)+\frac{c_2}{\sqrt{N}}\\
    =&\epsilon-\frac{\Phi'(\Phi^{-1}(1-\epsilon))}{\sqrt{N}}\left(\frac{c_0}{(\ell+1)R_0}-\frac{c_1}{(\ell+1)}\right)\\
    &+\frac{c_2}{\sqrt{N}}+o(\frac{1}{\sqrt{N}}).
\end{align*}

For $N$ large enough, $o(\frac{1}{\sqrt{N}})<\frac{1}{\sqrt{N}}$. By picking 
\[c_0\geq\left( \frac{c_2+1}{\Phi'(\Phi^{-1}(1-\epsilon))}+\frac{c_1}{\ell+1}\right)(\ell+1)R_0,\] we have that $\frac{\Phi'(\Phi^{-1}(1-\epsilon))}{\sqrt{N}}(\frac{c_0}{(\ell+1)R_0}-\frac{c_1}{(\ell+1)}))-\frac{c_2}{\sqrt{N}}-o(\frac{1}{\sqrt{N}})$ is positive eventually, which implies $P_e(\mathcal{C}_{\ell,N,R_N})< \epsilon$.
\end{proof}
}\fi

\subsection{Moderate Deviations}
\begin{lemma}[Moderate deviations, Corollary 1, \cite{7282769}]
For any sequence of real numbers $\epsilon_N$ s.t. $\epsilon_N\rightarrow 0$ as $ N\rightarrow\infty$ and $\epsilon_N\sqrt{N}\rightarrow \infty$ as $ N\rightarrow\infty$, for any sequence of codes $\{\mathcal{C}_{N,R_N}\}_N$ for the $\bsc^{fb}(p)$ such that $R_{N}\geq C(\bsc^{fb}(p))-\epsilon_N$, we have
\[\liminf_{N\rightarrow\infty} \frac{1}{N\epsilon_N^2}\log P_{e}(\mathcal{C}_{N,R_N})\geq-\frac{1}{2p(1-p)\log^2\frac{1-p}{p}}.\]
\end{lemma}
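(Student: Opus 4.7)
The plan is to obtain the bound by combining a converse-type lower bound on $P_e$ with a Taylor expansion of the sphere-packing exponent around capacity, exploiting the fact that feedback does not improve either the capacity or the dispersion of the BSC. Concretely, I would invoke a sphere-packing lower bound valid even in the presence of feedback (for instance, the one cited earlier as Lemma~\ref{thm:opt_error_exp}, or more cleanly the meta-converse of Polyanskiy–Poor–Verd\'u specialized to the $\bsc^{fb}(p)$) to assert that for any code of blocklength $N$ and rate $R_N$,
\begin{equation*}
 -\tfrac{1}{N}\log P_e(\mathcal{C}_{N,R_N}) \;\le\; E_{sp}(R_N) \,+\, \tfrac{1}{N}\,O(\log N),
\end{equation*}
where the $O(\log N)$ term absorbs the polynomial pre-factor in the sphere-packing bound (and any gap arising from the finite-blocklength meta-converse).

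Next, I would Taylor-expand $E_{sp}(R)$ around $R=C=1-h(p)$. Recall that $E_{sp}(R)=D(B(q)\|B(p))$ with $R=1-h(q)$; since $q=p$ at $R=C$, we have $E_{sp}(C)=0$ and $E_{sp}'(C)=0$, while a direct differentiation gives $E_{sp}''(C)=1/V$ with $V=p(1-p)\log^{2}\tfrac{1-p}{p}$ being the BSC dispersion. Hence, for any $\epsilon>0$ small,
\begin{equation*}
 E_{sp}(C-\epsilon) \;=\; \frac{\epsilon^{2}}{2V} \,+\, O(\epsilon^{3}).
\end{equation*}
Because $E_{sp}$ is decreasing in $R$ and $R_N\ge C-\epsilon_N$, one obtains $E_{sp}(R_N)\le E_{sp}(C-\epsilon_N)$ and therefore $E_{sp}(R_N)\le \tfrac{\epsilon_N^{2}}{2V}(1+o(1))$ as $\epsilon_N\to 0$.

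Combining these two pieces and dividing by $N\epsilon_N^{2}$ yields
\begin{equation*}
 \frac{1}{N\epsilon_N^{2}}\log P_e(\mathcal{C}_{N,R_N}) \;\ge\; -\frac{1}{2V}\,(1+o(1)) \,-\, \frac{O(\log N)}{N\epsilon_N^{2}}.
\end{equation*}
The hypothesis $\epsilon_N\sqrt{N}\to\infty$ forces $N\epsilon_N^{2}\to\infty$ much faster than $\log N$, so the pre-factor term vanishes, and taking $\liminf$ gives the claimed $-1/(2V)$.

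The main obstacle is keeping the error terms uniform: the pre-factor $K_1$ in Lemma~\ref{thm:opt_error_exp} depends on $R$ and hence on $N$ through $R_N$, and the Taylor remainder in $E_{sp}$ must be controlled as $R_N\to C$. I would handle the former by using a form of the sphere-packing converse whose pre-factor is $N^{-\Theta(1)}$ uniformly on a neighborhood of $C$ (which is routine for the BSC since $E_{sp}$ is analytic and its slope is bounded away from $\infty$ on compact subsets of $(0,C)$). The Taylor remainder is controlled using $|E_{sp}'''(\xi)|\le M$ for $\xi$ in a neighborhood of $C$, which yields an $O(\epsilon_N^{3})$ bound that is $o(\epsilon_N^{2})$ since $\epsilon_N\to 0$. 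Once these two technical issues are dispatched, the conclusion follows cleanly.
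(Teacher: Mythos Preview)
The paper does not prove this lemma: it is simply quoted as a known converse result (Corollary~1 of~\cite{7282769}), to serve as the benchmark that Theorem~\ref{thm:moderate} matches. There is therefore no proof in the paper to compare your proposal against.

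On its own merits, your sketch is the standard and correct route to this converse. The sphere-packing bound holds for the $\bsc$ with feedback (as the paper itself records in Lemma~\ref{thm:opt_error_exp}), and your local expansion $E_{sp}(C-\epsilon)=\epsilon^{2}/(2V)+O(\epsilon^{3})$ is right: at $R=C$ one has $q=p$, so $E_{sp}(C)=0$; the first derivative vanishes because $\partial_q D(B(q)\Vert B(p))\big|_{q=p}=0$; and the curvature works out to $1/V$. Dividing by $N\epsilon_N^{2}$ and using $N\epsilon_N^{2}\to\infty$ to kill the $O(\log N)$ pre-factor contribution gives the claim. The uniformity issue you flag for the constant $K_1$ is genuine but harmless here, since $|E_{sp}'(R)|\to 0$ as $R\to C$ keeps the polynomial order between $N^{-1/2}$ and $N^{-1}$ on a neighborhood of capacity; alternatively, for the $\bsc$ one can bypass Lemma~\ref{thm:opt_error_exp} entirely and obtain the needed lower bound directly from the Hamming-ball/noise-bijection argument (valid with feedback), which reduces $P_e$ to a binomial-tail probability with no rate-dependent constants to track.
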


\begin{theorem}
\label{thm:moderate}
Fix any $\ell\geq 2$. Let $C$ be the capacity of the $\bsc^{fb}(p_{\ell})$. 
For any sequence of real numbers $\epsilon_N$ s.t. $\epsilon_N\rightarrow 0$ as $ N\rightarrow\infty$ and $\epsilon_N\sqrt{N}\rightarrow \infty$ as $ N\rightarrow\infty$, consider the sequence of codes $\{\mathcal{C}_{\ell,N,R_{N}}\}_N$ such that $R_{N}= C-\epsilon_N$. Let $P_{e}(\mathcal{C}_{\ell,N,R_{N}})$ denote the average error probability of $\mathcal{C}_{\ell,N,R_{N}}$ over the $\bsc^{fb}(p_{\ell})$. Then 
\[\lim_{N\rightarrow\infty} \frac{1}{N\epsilon_N^2}\log P_{e}(\mathcal{C}_{\ell,N,R_{N}})=-\frac{1}{2p(1-p)\log^2\frac{1-p}{p}}.\]
\end{theorem}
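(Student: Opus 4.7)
The lower bound $\liminf \frac{1}{N\epsilon_N^2}\log P_e(\mathcal{C}_{\ell,N,R_N}) \geq -\frac{1}{2p_\ell(1-p_\ell)\log^2\frac{1-p_\ell}{p_\ell}}$ is provided verbatim by the converse lemma cited immediately above, so the content of the theorem is the matching upper bound. My plan is to reuse the machinery developed in the proofs of Theorems~\ref{thm:error_exp} and~\ref{thm:second_order_rate}: bound the decoding error by a one-sided binomial tail at the exact Hamming radius the rubber scheme can correct, and then apply a moderate-deviations estimate at the correct scale.

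First I would compute the correction radius. By Construction~\ref{cst:coding}, $\mathcal{C}_{\ell,N,R_N}$ decodes correctly whenever the number of channel flips is at most $Nf_N$, where $f_N = \frac{1}{\ell+1}(1-N'/N)$. Since $N'$ is the smallest integer with $\lceil \log|\mathcal{A}_\ell^{N'}|\rceil \geq NR_N+3$ and Lemma~\ref{lem: entropyrate} gives $\log|\mathcal{A}_\ell^{N'}| = N'\log\lambda^*_\ell + O(1)$, we get $N'/N = R_N/\log\lambda^*_\ell + O(1/N)$. Plugging $R_N = C - \epsilon_N$ together with $C = (1-(\ell+1)p_\ell)\log\lambda^*_\ell$ from Theorem~\ref{thm:main} yields
\[
f_N = p_\ell + \delta_N, \qquad \delta_N := \frac{\epsilon_N}{(\ell+1)\log\lambda^*_\ell} + O(1/N).
\]
Under the hypotheses $\epsilon_N\to 0$ and $\epsilon_N\sqrt{N}\to\infty$, the hypothesis $\epsilon_N\sqrt{N}\to\infty$ forces $1/\sqrt{N} = o(\epsilon_N)$, so $\delta_N\to 0$ and $\delta_N\sqrt{N}\to\infty$; that is, $\delta_N$ lies in the moderate-deviations regime.

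Next I would write $P_e(\mathcal{C}_{\ell,N,R_N}) \leq \Pr[S_N \geq N(p_\ell+\delta_N)]$, where $S_N\sim {\sf Bin}(N,p_\ell)$ is the number of channel flips. The moderate-deviations principle for i.i.d.\ Bernoulli sums states that for any $\delta_N\to 0$ with $\delta_N\sqrt{N}\to\infty$,
\[
\lim_{N\to\infty}\frac{1}{N\delta_N^2}\log\Pr[S_N \geq N(p_\ell+\delta_N)] = -\frac{1}{2p_\ell(1-p_\ell)}.
\]
Substituting the explicit $\delta_N$ above and using the identity $(\ell+1)\log\lambda^*_\ell = \log\frac{1-p_\ell}{p_\ell}$ (immediate from $p_\ell = (1+2^{(\ell+1)\log\lambda^*_\ell})^{-1}$), the expression collapses to exactly $-\frac{1}{2p_\ell(1-p_\ell)\log^2\frac{1-p_\ell}{p_\ell}}\cdot N\epsilon_N^2$ in the exponent, matching the converse.

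The main obstacle is purely bookkeeping: verifying that the additive correction $O(1/N)$ inside $\delta_N$ does not distort $\delta_N^2$ at the $\epsilon_N^2$ scale when we exponentiate with $N$. Concretely, $\delta_N^2 = \epsilon_N^2/((\ell+1)\log\lambda^*_\ell)^2 + O(\epsilon_N/N) + O(1/N^2)$, and absorbing the cross term requires $\epsilon_N N \to \infty$, which holds because $\epsilon_N N \geq \sqrt{N}\cdot(\epsilon_N\sqrt{N})\to\infty$. With this verified, the limsup upper bound combines with the cited liminf lower bound to give the claimed equality.
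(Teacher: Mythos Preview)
Your proposal is correct and follows essentially the same route as the paper: compute the correction threshold $f_N=p_\ell+\frac{\epsilon_N}{(\ell+1)\log\lambda^*_\ell}+O(1/N)$, bound $P_e$ by the binomial tail $\Pr[S_N\ge Nf_N]$, invoke the moderate-deviations principle for i.i.d.\ sums (the paper cites Dembo--Zeitouni, Theorem~3.7.1), and close with the identity $(\ell+1)\log\lambda^*_\ell=\log\frac{1-p_\ell}{p_\ell}$. The only cosmetic difference is that you take the lower bound from the converse lemma, whereas the paper asserts $P_e\ge\tfrac12\Pr[S_N\ge Nf_N]$ to obtain both directions from the tail; your route is arguably cleaner since the converse applies verbatim.
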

\if1\short{
\begin{proof}
Let $R_0=\log\lambda^*_{\ell}$. Note that $\mathcal{C}_{\ell,N,R_{N}}$ has rate $R_{N}= C-\epsilon_N$. The maximum fraction of errors it can correct is thus 

\begin{align*}
    f_N &= \frac{1}{\ell+1}\left(1-\frac{N'}{N}\right)=\frac{1}{\ell+1}\left(1-\frac{C-\epsilon_N}{R_0}\right)+O\left(\frac{1}{N}\right)\\
    &=p_{\ell}+\frac{\epsilon_N}{(\ell+1)R_0}+O\left(\frac{1}{N}\right).
\end{align*}
Let $E_i$ be the indicator random variable of whether the $i$-th bit is flipped. Then the error probability $P_{e}(\mathcal{C}_{\ell,N,R_{N}})\leq\Pr[\sum_{i=1}^NE_i\geq Nf_N]$ and $P_{e}(\mathcal{C}_{\ell,N,R_{N}})\geq\frac{1}{2}\Pr[\sum_{i=1}^NE_i\geq Nf_N]$. Let $\epsilon_N'=(f_N-p_{\ell})(\ell+1)R_0$. Then

\begin{align*}
    &\lim_{N\rightarrow\infty}\frac{\epsilon_N}{\epsilon_N'}=\lim_{N\rightarrow\infty}\frac{(f_N-p_{\ell}-O(\frac{1}{N}))(\ell+1)R_0}{(f_N-p_{\ell})(\ell+1)R_0}=1,
\end{align*}
where the last step comes from the fact that $\epsilon_N=\Omega(\frac{1}{\sqrt{N}})$, $f_N-p_{\ell}=\Omega(\frac{1}{\sqrt{N}})$. Define $Z_N=\frac{1}{N\epsilon_N'}\sum_{i=1}^N(E_i-p)$. Then we have
\begin{align}
    &\lim_{N\rightarrow\infty} \frac{1}{N\epsilon_N^2}\log P_{e}(\mathcal{C}_{\ell,N,R_{N}})\nonumber\\
    =&\lim_{N\rightarrow\infty} \frac{1}{N\epsilon_N^2}\log \Pr\left[\sum_{i=1}^NE_i\geq Nf_N\right]\nonumber\\
    =&\lim_{N\rightarrow\infty} \frac{1}{N\epsilon_N^2}\log \Pr\left[Z_N\geq\frac{f_N-p_{\ell}}{\epsilon_N'}\right]\nonumber\\
    =&\lim_{N\rightarrow\infty} \frac{1}{N\epsilon_N^2}\log \Pr\left[Z_N\geq\frac{1}{(\ell+1)R_0}\right]\nonumber\\
    =&\lim_{N\rightarrow\infty}\frac{\epsilon_N'^2}{\epsilon_N^2} \frac{1}{N\epsilon_N'^2}\log \Pr\left[Z_N\geq\frac{1}{(\ell+1)R_0})\right]\nonumber\\
    =&-\frac{1}{2p(1-p)\log^2\frac{1-p}{p}}\nonumber.
\end{align}
where the last equation comes from Theorem 3.7.1 in \cite{dembo1998large} and the fact that when $p=p_{\ell}$, $(\ell+1)R_0=\log\frac{1-p}{p}$.

\end{proof}
}\fi
\section{Appendix}

\begin{lemma}
\label{lem:tail}
Let $E_1,\dots,E_N$ be i.i.d. random variables with $E_1\sim B(p)$. Let $f_N$ be a sequence of real numbers converging to $f^*\in(0,1)$ such that $f^*>p$. Then for large $N$,
\begin{align*}
   &\Pr\left[\sum_{i=1}^NE_i\geq Nf_N\right]\\
   \leq &\frac{e^{-ND(B(f_N)\|B(p))}}{\sqrt{2\pi f_N(1-f_N)N}}
    \left(a_N+o\left(\frac{1-r_N^{\lfloor(1-f_N)N\rfloor+1}}{1-r_N}\right)\right),
\end{align*}
where
\begin{align*}
    r_N&=\frac{p}{f_N}\frac{1-f_N}{1-p},\\
    a_N&=\frac{1-r_N^{\lfloor(1-f_N)N\rfloor+1}\exp{-(\frac{\lfloor(1-f_N)N\rfloor+1}{2f_N(1-f_N)N})}}{1-r_N\exp{(-\frac{1}{2f_N(1-f_N)N})}}.
\end{align*}
\end{lemma}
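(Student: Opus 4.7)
The plan is to combine Cram\'er's exponential change of measure with the local central limit theorem (CLT), reducing the binomial tail to a geometric-series calculation whose common ratio is exactly the one appearing inside $a_N$.

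I would begin with exponential tilting to the mean $f_N$. Writing $b_k := \binom{N}{k} p^k (1-p)^{N-k}$ and $\tilde{b}_k := \binom{N}{k} f_N^k (1-f_N)^{N-k}$, a direct log-likelihood computation, using the identity $\log\bigl(p(1-f_N)/(f_N(1-p))\bigr) = \log r_N$, yields
\begin{equation*}
    b_k = \tilde{b}_k \cdot e^{-N D(B(f_N)\|B(p))} \cdot r_N^{\,k - N f_N}.
\end{equation*}
Let $k_0 := \lceil N f_N \rceil$ and $M := N - k_0 + 1$; case-checking the fractional part of $N f_N$ shows $M = \lfloor (1-f_N) N \rfloor + 1$. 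Then the tail splits as
\begin{equation*}
    \Pr\!\left[\textstyle\sum_{i=1}^N E_i \geq N f_N\right] = e^{-N D(B(f_N)\|B(p))} \cdot r_N^{\,k_0 - N f_N} \sum_{j=0}^{M-1} \tilde{b}_{k_0+j} \, r_N^{\,j}.
\end{equation*}

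Next, I would invoke the local CLT on ${\sf Bin}(N,f_N)$, which has mean $N f_N$ and variance $N f_N(1-f_N)$, to obtain
\begin{equation*}
    \tilde{b}_{k_0+j} = \frac{1+o(1)}{\sqrt{2\pi N f_N(1-f_N)}} \exp\!\left(-\frac{(k_0+j - N f_N)^2}{2 N f_N(1-f_N)}\right)
\end{equation*}
uniformly on a regime around the mean. Since $k_0 - N f_N \in [0,1)$ and $j \geq 0$, we have $(k_0 + j - N f_N)^2 \geq j^2 \geq j$ for every integer $j \geq 0$, so the Gaussian factor is upper-bounded by $\exp\bigl(-j/(2 N f_N(1-f_N))\bigr)$. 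Multiplying by $r_N^{\,j}$ converts the sum into a geometric series with common ratio $r_N \exp\bigl(-1/(2 N f_N(1-f_N))\bigr)$, whose first $M$ terms sum to exactly $a_N$. Since $f^* > p$ implies $r_N < 1$, the factor $r_N^{\,k_0 - N f_N} \leq 1$ produces the prefactor $1/\sqrt{2\pi N f_N(1-f_N)}$ stated in the lemma.

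The main obstacle is controlling the error uniformly over the full summation range: the local CLT gives a sharp Gaussian approximation only for $j$ in a $o(N^{2/3})$-window, whereas $j$ ranges up to $\Theta(N)$. For larger $j$ I would instead use a Stirling-based upper bound $\tilde{b}_{k_0+j} \leq \frac{1+o(1)}{\sqrt{2\pi N f(1-f)}} e^{-N D(B(f)\|B(f_N))}$ with $f = (k_0+j)/N$ and exploit that $r_N$ is bounded strictly below $1$ to verify that these tail terms are geometrically dominated by the leading ones, so their contribution is absorbed into the error. The residual multiplicative factor $(1+o(1))$ multiplying $a_N$ then produces the additive $o\!\left((1-r_N^{M})/(1-r_N)\right)$ in the statement, because $a_N = O((1-r_N^{M})/(1-r_N))$.
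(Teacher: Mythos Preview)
Your proposal is correct and follows the same route as the paper: exponential tilting to the $B(f_N)$ law, a local CLT bound on the tilted point masses, the inequality $j^2\ge j$ on the Gaussian exponent, and summation of the resulting geometric series with ratio $r_N\exp\bigl(-1/(2Nf_N(1-f_N))\bigr)$. The only divergence is in how the local-CLT error is handled. You invoke the multiplicative $(1+o(1))$ form, which is valid only in a window $j=o(N^{2/3})$ and so forces a separate Stirling argument for large $j$. The paper instead uses Petrov's local CLT in its \emph{additive} form,
\[
\tilde b_{k_0+j}\;\le\;\frac{1}{\sqrt{2\pi Nf_N(1-f_N)}}\,e^{-j^2/(2Nf_N(1-f_N))}\;+\;o\!\left(N^{-1/2}\right),
\]
with the $o(N^{-1/2})$ uniform over all integers $j$; summing this uniform error against $r_N^{\,j}$ directly produces the $o\bigl((1-r_N^{M})/(1-r_N)\bigr)$ term with no regime split. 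Your approach works, but the additive form gives the stated bound in one stroke.
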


\if1\short{
\begin{proof}
We follow Theorem 2 in \cite{arratia1989tutorial}. For any fixed $N$, let $Y_1,\dots, Y_N$ be i.i.d. random variables with $Y_1\sim B(f_N)$. For any integer $S\in[0,N]$, we have that
\begin{align*}
    &\Pr\left[\sum_{i=1}^N E_i=S\right]=\binom{N}{S}p^S(1-p)^{N-S},\\
    &\Pr\left[\sum_{i=1}^N Y_i=S\right]=\binom{N}{S}f_N^S(1-f_N)^{N-S}.
\end{align*}
Therefore for any integer $j$, for large $N$,
\begin{align*}
    \Pr&\left[\sum_{i=1}^N E_i=\lceil Nf_N\rceil +j\right]\\
    =\Pr&\left[\sum_{i=1}^N Y_i=\lceil Nf_N\rceil +j \right]\left(\frac{p}{f_N}\right)^{\lceil Nf_N\rceil+j}\\
    &\cdot\left(\frac{1-p}{1-f_N}\right)^{\lfloor N(1-f_N)\rfloor-j}\\
    \leq\Pr&\left[\sum_{i=1}^N Y_i=\lceil Nf_N\rceil +j\right]\left(\frac{p}{f_N}\right)^{ Nf_N+j}\\
    &\cdot\left(\frac{1-p}{1-f_N}\right)^{ N(1-f_N)-j}\\
    =\Pr&\left[\sum_{i=1}^N Y_i=\lceil Nf_N\rceil+j\right]e^{-ND(B(f_N)\|B(p))}r_N^j,
\end{align*}
where the inequality comes from the fact that when $N$ is large, $f_N>p$. Then we have
\begin{align*}
    &\Pr\left[\sum_{i=1}^N E_i\geq Nf_N\right]\\
    =&\sum_{j=0}^{\lfloor(1-f_N)N\rfloor}\Pr\left[\sum_{i=1}^N E_i= \lceil Nf_N\rceil+j\right]\\
    \leq&e^{-ND(B(f_N)\|B(p))}\\
    &\cdot\sum_{j=0}^{\lfloor(1-f_N)N\rfloor}\Pr\left[\sum_{i=1}^N Y_i=\lceil Nf_N\rceil+j\right]r_N^j.
\end{align*}
According to the local central limit theorem (see Theorem 2 of \cite{petrov1964local}), for any $j=0,1,\dots,\lfloor(1-f_N)N\rfloor$
\begin{align*}
    &\Pr\left[\sum_{i=1}^N Y_i=\lceil Nf_N\rceil+j\right]\\
    \leq &\frac{1}{\sqrt{2\pi f_N(1-f_N)N}}\exp\left({-\frac{j^2}{2f_N(1-f_N)N}}\right)\\
    &+o\left(\frac{1}{\sqrt{N}}\right)\\
    \leq &\frac{1}{\sqrt{2\pi f_N(1-f_N)N}}\exp\left({-\frac{j}{2f_N(1-f_N)N}}\right)\\
    &+o\left(\frac{1}{\sqrt{N}}\right).
\end{align*}

Plugging back we have,
\begin{align*}
    &\Pr\left[\sum_{i=1}^N E_i\geq Nf_N\right]\\
    \leq&\frac{e^{-ND(B(f_N)\|B(p))}}{\sqrt{2\pi f_N(1-f_N)N}}\\
    &\cdot\sum_{j=0}^{\lfloor(1-f_N)N\rfloor} r_N^j\left[\exp\left({-\frac{j}{2f_N(1-f_N)N}}\right)+o(1)\right]\\
    =&\frac{e^{-ND(B(f_N)\|B(p))}}{\sqrt{2\pi f_N(1-f_N)N}}
    \left(a_N+o\left(\frac{1-r_N^{\lfloor(1-f_N)N+1}\rfloor}{1-r_N}\right)\right).
\end{align*}
\end{proof}
}\fi

\begin{lemma}
\label{lem:AlN_lb}
For any $N$, $2^{\frac{N}{2}}\leq A_{\ell}(N)\leq 2^N$.
\end{lemma}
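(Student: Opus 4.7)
The upper bound is immediate and requires no work: since $\mathcal{A}_{\ell}^{N} \subseteq \{0,1\}^{N}$, we have $A_{\ell}(N) \leq 2^{N}$. So the entire task is to establish the lower bound $A_{\ell}(N) \geq 2^{N/2}$.

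My plan is to avoid the recursion $A_{\ell}(N) = A_{\ell}(N-1) + \cdots + A_{\ell}(N-\ell)$ and instead exhibit an explicit subset of $\mathcal{A}_{\ell}^{N}$ of sufficient size. Let $\mathcal{S} \subseteq \{0,1\}^{N}$ be the set of binary strings $x^{N}$ in which every even-indexed bit is pinned to $1$, i.e.\ $x_{2} = x_{4} = \cdots = x_{2\lfloor N/2\rfloor} = 1$, with the remaining $\lceil N/2 \rceil$ odd-indexed bits free. Then $|\mathcal{S}| = 2^{\lceil N/2\rceil} \geq 2^{N/2}$, so it suffices to verify $\mathcal{S} \subseteq \mathcal{A}_{\ell}^{N}$ whenever $\ell \geq 2$ (the only relevant range, since $\ell = 1$ would make $\mathcal{A}_{\ell}^{N}$ trivial).

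The containment $\mathcal{S} \subseteq \mathcal{A}_{\ell}^{N}$ follows from a one-line pigeonhole observation: any block of $\ell \geq 2$ consecutive positions contains at least one even index, and at any even index the bit is forced to $1$. Hence no window of $\ell$ consecutive positions in an element of $\mathcal{S}$ can be entirely zero, which is precisely what it means to lie in $\mathcal{A}_{\ell}^{N}$.

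There is really no substantial obstacle here; the only mild subtlety is parity when $N$ is odd, which the construction already handles because forcing the even positions (the shorter arithmetic progression) leaves the larger set of odd positions free, so the count $2^{\lceil N/2 \rceil}$ always dominates $2^{N/2}$. If for some reason an inductive proof from the recursion were preferred, I would note that for $N \geq \ell+1$ the recursion gives $A_{\ell}(N) \geq A_{\ell}(N-1) + A_{\ell}(N-2) \geq 2^{(N-1)/2} + 2^{(N-2)/2} = 2^{(N-2)/2}(1 + \sqrt{2}) \geq 2^{N/2}$, with base cases $N \leq \ell$ checked directly using $A_{\ell}(N) = 2^{N}$ for $N < \ell$ and $A_{\ell}(\ell) = 2^{\ell}-1 \geq 2^{\ell/2}$; but the explicit construction above is cleaner and uniform in $\ell$ and $N$.
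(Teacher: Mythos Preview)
Your proof is correct and takes a genuinely different route from the paper. The paper proves the lower bound by induction using the full recursion $A_{\ell}(k)=\sum_{j=1}^{\ell}A_{\ell}(k-j)$, summing the geometric series $\sum_{j=1}^{\ell}2^{(k-j)/2}$ and then invoking the numerical inequality $(\sqrt{2})^{\ell+1}\le 2(\sqrt{2})^{\ell}-1$ (valid for $\ell\ge 2$) to close the induction. Your main argument bypasses the recursion entirely: pinning all even-indexed bits to $1$ produces an explicit subset of $\mathcal{A}_{\ell}^{N}$ of size $2^{\lceil N/2\rceil}$, and the pigeonhole ``every window of length $\ge 2$ hits an even index'' is all that is needed. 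This is shorter, uniform in $N$ and $\ell\ge 2$, and requires no side inequality; the paper's induction, by contrast, keeps closer to the analytic machinery ($\lambda_\ell^*$, the recursion) used elsewhere in the paper. Your parenthetical inductive variant---keeping only the two largest terms $A_{\ell}(N-1)+A_{\ell}(N-2)$ and using $1+\sqrt{2}\ge 2$---is also correct and in fact streamlines the paper's own induction, which sums all $\ell$ terms only to arrive at the same conclusion.
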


\if1\short{
\begin{proof}
It follows directly from the definition that $A_{\ell}(N)\leq 2^N$. To see that $2^{\frac{N}{2}}\leq A_{\ell}(N)$, we use induction on $N$.

Note that the initial conditions, $A_{\ell}(1)=2,\dots,A_{\ell}(N-1)=2^{\ell-1}$, $A_{\ell}^N=2^{\ell}-1$ all satisfy the condition. Suppose that $2^{\frac{N}{2}}\leq A_{\ell}(N)$ holds for all $i\leq k$. Then
\begin{align*}
    &A_{\ell}(k)=A_{\ell}(k-1)+\dots+A_{\ell}(k-\ell)\\
    \geq &2^{\frac{k-1}{2}}+\dots+2^{\frac{k-\ell}{2}}\\
    =&\frac{\sqrt{2}^{k-\ell}-\sqrt{2}^{k}}{1-\sqrt{2}}\\
    \geq &2^{\frac{k}{2}},
\end{align*}
where the last inequality comes from the fact that $\sqrt{2}^{\ell+1}\leq 2\sqrt{2}^{\ell}-1$.
Therefore $2^{\frac{N}{2}}\leq A_{\ell}(N)\leq 2^N$.
\end{proof}
}\fi

\section*{Acknowledgment}
This research was supported by the US National Science Foundation under grant  CCF-1956192 
and the US Army Research Office under grant W911NF-18-1-0426.
\newpage
\bibliographystyle{IEEEtran}
\bibliography{reference}
\end{document}